\renewcommand\p@subsection{}
\newtheorem{theorem}{Theorem}[section]        
\newtheorem{lemma}[theorem]{Lemma}             
\newtheorem{definition}[theorem]{Definition}   
\newtheorem{proposition}[theorem]{Proposition} 
\begin{document}

\selectlanguage{english} 

\title{Secrecy results for compound wiretap channels}

\author{\firstname{Igor} \surname{ Bjelakovi\'c}}
\email{igor.bjelakovic@tum.de}
\affiliation{Lehrstuhl f\"ur theoretische Informationstechnik, Technische Universit\"at M\"unchen, 80290 M\"unchen, Germany}
\author{\firstname{Holger} \surname{Boche}}
\email{boche@tum.de}
\affiliation{Lehrstuhl f\"ur theoretische Informationstechnik, Technische Universit\"at M\"unchen, 80290 M\"unchen, Germany}
\author{\firstname{Jochen} \surname{Sommerfeld}}
\email{jochen.sommerfeld@tum.de}
\affiliation{Lehrstuhl f\"ur theoretische Informationstechnik, Technische Universit\"at M\"unchen, 80290 M\"unchen, Germany}

\begin{abstract}
We derive a lower bound on the secrecy capacity of the compound wiretap channel with channel state
information at the transmitter which matches the general upper bound on the secrecy capacity of general
compound wiretap channels given by Liang et al. and thus establishing a full coding theorem in this
case. We achieve this with a stronger secrecy criterion and the maximum error probability criterion, and
with a decoder that is robust against the effect of randomisation in the encoding. This relieves us from
the need of decoding the randomisation parameter which is in general not possible within this
model. Moreover we prove a lower bound on the secrecy capacity of the compound wiretap channel without
channel state information and derive a multi-letter expression for the capacity in this communication scenario. 
\end{abstract}

\maketitle

\section{Introduction}

Compound wiretap channels are among the simplest non-trivial models incorporating the requirement of
security against a potential eavesdropper while at the same time the legitimate users suffer from
channel uncertainty. They may be considered therefore as a starting point for theoretical investigation
tending towards applications, for example, in wireless systems, a fact explaining an alive research
activity in this area in recent years (cf. \cite{liang}, \cite{bloch} and references therein). In this
article we give capacity results for different scenarios of channel state information under a strong
secrecy criterion and the maximum error probability criterion. In a more recent work \cite{bjela4} the authors make use
of these results to derive capacity results for arbitrarily varying wiretap channels, a more realistic
communication model, which, apart from eavesdropping, takes into account an active adversarial jamming situation.\\
In this paper we consider finite families of pairs of channels $\mathfrak{W}=\{(W_t,V_t):t=1,\ldots,
T\}$ with common input alphabet and
possibly different output alphabets. The legitimate users control $W_t$ and the eavesdropper observes the
output of $V_t$. We will be dealing with two communication scenarios. In the first one the transmitter is 
informed about the index $t$ (channel state information (CSI) at the transmitter) while in the second the
transmitter has no information about that index at all (no CSI). In both scenarios the eavesdropper knows
and the legitimate receiver does not know the channel state. This setup is a generalisation of
Wyner's \cite{wyner-wire} wiretap channel.\\
Along the way we will comment what our results look like when applied to widely used class of models of
the form $\mathfrak{W}=\{(W_t,V_s):t=1,\ldots, T, s=1,\ldots, S \}$ with $T\neq S$ which are special
cases of the model we are dealing with in this paper.\\
Our contributions are summarised as follows: In \cite{liang} a general upper bound on the capacity of
compound wiretap channel as the minimum secrecy capacity of the involved wiretap channels was given. We
prove in Section \ref{csi} that the models whose secrecy capacity matches this upper bound contain
all compound wiretap channels with CSI at the transmitter. At the same time we achieve this bound with a
substantially stronger security criterion employed already in \cite{csis96}, \cite{maurer-wolf},
\cite{cai-winter-yeung}, and \cite{devetak}. Indeed, our security proof follows closely that developed in
\cite{devetak} for single wiretap channel with classical input and quantum output. In order to achieve
secrecy we follow the common approach according to which randomised encoding is a permissible
operation. Usually , the legitimate decoder can decode the sent codeword that represents both the message
to be transmitted and the outcome of the random experiment as well. However, in the case of
compound wiretap channel with CSI at the transmitter this strategy does not work as is illustrated by an
example in Section \ref{sec:example1}. We resolve this difficulty by developing a decoding strategy which
is independent of the particular channel realisation and is insensitive to randomisation while decoding
just at the optimal secrecy rate for all channels $\{W_t: t=1,\ldots, T  \}$ simultaneously.\\
Moreover, a slight modification of our proofs allows us to determine the  capacity of the compound 
wiretap channel without CSI by a (non-computable) multi-letter expression. This is content of Section
\ref{sec:no-csi}.
We should mention, however, that the traditional proof strategy of sending the pair consisting of message
and randomisation parameter to the legitimate receiver works as well in the case where the transmitter
has no CSI. 
The lower bound on the secrecy capacity, we proofed under the strong secrecy criterion, we
  have used for parts of the secrecy results for arbitrarily varying wiretap channels in
  \cite{bjela4}. The lower bound on the secrecy capacity as well the as the multi-letter expression were
  given earlier in \cite{liang} respective in \cite{bloch} for weaker secrecy criteria but without detailed proofs.\\ 
In Section \ref{sec:example2} we give an example of compound wiretap channel such that both the set of
channels to the legitimate receiver and to the eavesdropper are convex but whose secrecy capacities with
CSI and without CSI at the transmitter are different. Indeed the former is positive while the latter is
equal to $0$.\\  
Section \ref{sec:csi-t} is devoted to the practically important model
$\mathfrak{W}=\{(W_t,V_s):t=1,\ldots, T, s=1,\ldots, S \}$ with the assumption that the transmitter has
CSI for the $T$-part but has no CSI for the $S$-part of the channel. Here again we provide a multi-letter
expression for the capacity. Additionally, we give a computable description of the secrecy capacity
 in the case where the channels to the eavesdropper are degraded versions of those to the 
legitimate receiver.\\
Our results are easily extended to arbitrary sets (even uncountable) of wiretap channels via standard
approximation techniques \cite{blackw}.

\section{Compound wiretap channels}\label{sec:b}
\subsection{Definitions}

Let $A,B,C$ be finite sets and $\theta=\{1, \ldots, T\}$ an index set. We consider two families of channels 
$W_t:A\to\mathcal{P}(B)$\footnote{$\mathcal{P}(B)$ denotes the set of probability distributions on $B$.}, 
$V_t:A\to \mathcal{P}(C)$, $t\in \theta$, which we collectively abbreviate by $\mathfrak{W}$ and call
the compound wiretap channel generated by the given families of channels. 
Here the first family represents the communication link to the legitimate receiver while the output of
the latter is under control of the eavesdropper. In the rest of the paper expressions like $W_t^{\otimes
  n}$ or $V_t^{\otimes n}$ stand for the $n$-th memoryless extension of the stochastic matrices $W_t$,
$V_t$. \\
An $(n,J_n)$ code for the compound wiretap channel $\mathfrak{W}$ consists of a stochastic encoder 
$E:\mathcal{J}_n\to \mathcal{P}(A^n)$ (a stochastic matrix) with a message set
$\mathcal{J}_n:=\{1,\ldots, J_n \}$ and a collection of mutually 
disjoint decoding sets $\{D_j\subset B^n:j\in\mathcal{J}_n  \}$. The maximum error probability of a
$(n,J_n)$ code $\mathcal{C}_n$ is given by
\begin{equation}\label{eq:1}
 e(\mathcal{C}_n):= \max_{t\in \theta}  \, \max_{j \in \mathcal{J}_n} \sum_{x^n\in A^n} E(x^n|j)
W_t^{\otimes n}(D_j^c| x^n).  
\end{equation} 
I.e. neither the sender nor the receiver have CSI.\\
If channel state information is available at the transmitter the notion of $(n,J_n)$ code is modified in
that the encoding may depend on the channel index while the decoding sets remain universal,
i.e. independent of the channel index
 $t$. 
The probability of error in (\ref{eq:1}) changes to
\begin{equation*}
e_{\textup{CSI}}(\mathcal{C}_n):= \max_{t\in \theta} \, \max_{j \in\mathcal{J}_n} \sum_{x^n\in A^n}
E_t(x^n|j)
W_t^{\otimes n}(D_j^c| x^n).
\end{equation*} 
We assume throughout the paper that the eavesdropper always knows which channel is in use.
\begin{definition}\label{code}
A non-negative number $R$ is an achievable secrecy rate for the compound wiretap channel $\mathfrak{W}$
with or without CSI respectively if there is a sequence $(\mathcal{C}_n)_{n\in\mathbb{N}}$ of $(n,J_n)$
codes such that
\[ \lim_{n\to\infty} e(\mathcal{C}_n)=0 \textrm{ resp.  } \lim_{n\to\infty} e_{\textup{CSI}}(\mathcal{C}_n)=0,\]
\[\liminf_{n\to\infty}\frac{1}{n}\log J_n\ge R , \]
and
\begin{equation}\label{eq:2} 
\lim_{n\to\infty} \max_{t\in\theta} I(J; Z_t^n)=0, 
\end{equation}
where $J$ is a uniformly distributed random variable taking values in $\mathcal{J}_n$ and $Z_t^{n}$ are
the resulting random variables at the output of eavesdropper's channel $V_t^{\otimes n}$.\\
The secrecy capacity in either scenario is given by the largest achievable secrecy rate and is denoted by
$C_S(\mathfrak{W})$ and $C_{S, CSI}(\mathfrak{W})$.
\end{definition}

\subsection{Hints on operational meaning of strong secrecy}

A weaker and widely used security criterion is obtained if we replace
(\ref{eq:2}) by $\lim_{n\to\infty}\max_{t\in\theta}\frac{1}{n}I(J; Z_t^n)=0 $. We prefer to
follow \cite{csis96}, \cite{cai-winter-yeung},  and \cite{devetak} and require the validity of
(\ref{eq:2}). A nice discussion on interrelation of several secrecy criteria is
contained in \cite{bloch}. We confine ourselves to giving some hints on the operational meaning of the
requirement (\ref{eq:2}). To this end we restrict our attention to the case where the
transmitter has no CSI in order to simplify our notation. 
The case of compound wiretap channel with CSI at the transmitter can be treated accordingly. Set
\[\varepsilon_n:=  \max_{t\in\theta} I(J; Z_t^n) \textrm{ with } \lim_{n\to\infty}\varepsilon_n=0. \]
Then Pinsker's inequality implies that
\begin{equation}\label{3}
||p_{J Z_t^n}-p_{J}\otimes p_{Z_t^n}     ||\le c \sqrt{\varepsilon_n} \quad \forall t\in\theta,  
\end{equation}
with a positive universal constant $c$, where $||\cdot||$ is the variational distance. Suppose that the
eavesdropper chooses for each $t\in\theta$ decoding sets  $\{K_{j,t}\subset C^n: j\in\mathcal{J}_n  \}$
with $C^n=\bigcup_{j\in\mathcal{J}_n}K_{j,t}$. We will lower bound the average error probability (and
consequently the maximum error probability) for every choice of the decoding rule the eavesdropper might
make. Set
\begin{equation*}
e_{\textrm{av}}(t):=\frac{1}{J_n}\sum_{j\in \mathcal{J}_n}\sum_{x^n\in A^n} E(x^n|j) 
V_t^{\otimes n}(K_{j,t}^{c}|x^n).
\end{equation*}
Then
\begin{eqnarray}\label{4}
 e_{\textrm{av}}(t)&=&   \sum_{j\in\mathcal{J}_n}p_{J Z_t^n}(\{ j \}\times K_{j,t}^{c})
                            = p_{J Z_t^n}\big( \bigcup_{j\in\mathcal{J}_n} \{ j \}\times K_{j,t}^{c}   \big)\nonumber\\
&\ge& p_{J}\otimes p_{Z_t^n}\Big(\bigcup_{j\in\mathcal{J}_n}\{ j \}\times K_{j,t}^{c}\Big)-c\sqrt{\varepsilon_n}\nonumber\\
&=& \sum_{j\in\mathcal{J}_n} p_{J}\otimes p_{Z_t^n}\left(\{ j \}\times K_{j,t}^{c}\right)-c\sqrt{\varepsilon_n}
  = \frac{1}{J_n}\sum_{j\in\mathcal{J}_n}p_{Z_t^n}(K_{j,t}^{c} )-c\sqrt{\varepsilon_n}\nonumber\\
&=& \frac{J_n-1}{J_n}-c\sqrt{\varepsilon_n}
= 1-\frac{1}{J_n}-c\sqrt{\varepsilon_n},
\end{eqnarray}
where in the first and the third line we have used the fact that the sets $\{j\}\times K_{j,t}^c$,
$j\in \mathcal{J}_n$, are mutually disjoint, the second line follows from (\ref{3}), and in the
fourth line we merely observed  that for any non-negative numbers $a_1,\ldots, a_J$ with
$\sum_{j=1}^{J}a_j=1$ we have $\sum_{j=1}^{J}(1-a_j)=J-1$. 
Consequently, the average (and hence maximum) error probability of every decoding strategy the
eavesdropper might select tends to $1$ as soon as $J_n\to \infty$. 
It should be remarked, however, that although for the vast majority of messages the eavesdropper will be
in error there is still a possibility left that she/he can decode a small fraction of them correctly.\\
As will follow from the proofs below we will have $\varepsilon_n=2^{-na}$, $a>0$, and $J_n=2^{nR}$, $R>0$, if
the secrecy capacity  is positive so that the speed of convergence in (\ref{4}) will be
exponential. 

Notice that (\ref{3}) means that the random variables $Z_t^n$ at the output of the channel to the
eavesdropper are almost independent of the random variable $J$ embodying the messages to be transmitted
to the legitimate receiver. 
Therefore it is heuristically convincing that our criterion (\ref{eq:2}) offers secrecy to some extent
for communication tasks going beyond the transmission of messages. 
To demonstrate this by an example we introduce, based on \cite{ahlsw-dueck}, the notion of
identification attack as follows. Suppose that for each fixed $t\in\theta$ and any $j\in \mathcal{J}_n$
there is a subset $K_{j,t}\subset C^n$ on the eavesdropper's output alphabet where now the sets $K_{j,t}$
need not necessarily be mutually disjoint.  
With $E:\mathcal{J}_n\to \mathcal{P}(A^n)$ being the stochastic encoder used to transmit messages to the 
legitimate receiver we can write down the identification errors of first and second kind
(cf. \cite{ahlsw-dueck} for further explanation of this code concept) for the eavesdropper's channel as
\begin{equation}\label{eq:5}  
\sum_{x^n\in A^n}E(x^n|j)V^{\otimes n}_t(K_{j,t}^{c}|x^n), 
\end{equation}
and
\begin{equation}\label{eq:6}
 \sum_{x^n\in A^n}E(x^n|i)V^{\otimes n}_t(K_{j,t}|x^n) 
\end{equation}
for $j,i\in\mathcal{J}_n$, $i\neq j$.\\
One possible interpretation of this attack, again based on \cite{ahlsw-dueck}, is that on the eavesdropper's 
side of the channel there are persons $F_1,\ldots, F_{J_n}$ observing the output of the channel. 
The sole interest of $F_j$ is whether or not the message $j$ has been sent to the legitimate
receiver. 
Thus $F_j$ performs the hypothesis test represented by  $K_{j,t}$ based on his/her knowledge of $t\in
\theta$ and (\ref{eq:5}), (\ref{eq:6}) are just the errors of the first resp. second
kind for that hypothesis test.\\  
Let us define for $j\in\mathcal{J}_n$
\begin{equation*}
 g(j,t):=\sum_{x^n\in A^n}\bigg(E(x^n|j)V^{\otimes n}_t(K_{j,t}^c|x^n)+\frac{1}{J_n
     -1}\sum_{\substack{i=1\\i\neq j}}^{J_n} E(x^n|i)V^{\otimes n}_t(K_{j,t}|x^n)  \bigg)
\end{equation*} 
which is a number in $[0,2]$.\\
Notice that if
\[ g(j,t)\ge 1-\eta \]
for some $\eta\in (0,1)$ then either
\[ \sum_{x^n\in A^n}E(x^n|j)V^{\otimes n}_t (K_{j,t}^{c}|x^n)\ge \frac{1-\eta}{2}, \]
or there is at least one $i\neq j$ with
\[ \sum_{x^n\in A^n}E(x^n|i)V^{\otimes n}_t (K_{j,t}|x^n)\ge \frac{1-\eta}{2}, \]
or both, so that no reliable identification of message $j$ can be guaranteed. We show now that under
assumption of (\ref{eq:2}) we have 
\begin{equation}\label{eq:7}
 \frac{1}{J_n}\sum_{j=1}^{J_n}g(j,t)\ge 1- \eta_n, \quad \eta_n=o(n^0) 
\end{equation}
so that at most a fraction $\frac{2}{3}(1+\eta_n)$ of $j\in\mathcal{J}_n$ can satisfy the inequality
\[ g(j,t)<\frac{1}{2}.  \]
This last assertion is readily seen from (\ref{eq:7}) by applying Markov's inequality to the set
\[ F:=\{j\in \mathcal{J}_n: 2-g(j,t)>\frac{3}{2}  \}. \]
In order to prove (\ref{eq:7}), note that for any $t\in\theta$
\begin{eqnarray*}
 \frac{1}{J_n}\sum_{j=1}^{J_n}g(j,t)&=& \sum_{j=1}^{J_n}\left(p_{JZ_t^n}( \{j\}\times K_{j,t}^c)+
                                         \frac{1}{J_n -1} p_{JZ_t^n}(\{ j\}^c\times K_{j,t})
                                          \right)\\
&=& p_{JZ_t^n}(\bigcup_{j\in\mathcal{J}_n} \{j\}\times K_{j,t}^c  )+
          \frac{1}{J_n -1}\sum_{j=1}^{J_n}p_{JZ_t^n}(\{ j\}^c\times K_{j,t})\\
&\ge&  p_{J}\otimes p_{Z_t^n}(\bigcup_{j\in\mathcal{J}_n} \{j\}\times K_{j,t}^c  )+
          \frac{1}{J_n -1}\sum_{j=1}^{J_n}p_{J}\otimes p_{Z_t^n}(\{ j\}^c\times K_{j,t})-c\sqrt{\varepsilon_n}-
              c\frac{J_n}{J_n -1}\sqrt{\varepsilon_n}\\
&=& \frac{1}{J_n}\sum_{j=1}^{J_n}\left( p_{Z_t^n}(K_{j,t}^c )+ p_{Z_t^n}(K_{j,t})   \right)-c\sqrt{\varepsilon_n}\frac{2J_n-1}{J_n-1}\\
&=& 1-c\sqrt{\varepsilon_n}\frac{2J_n-1}{J_n-1},
\end{eqnarray*}
where in the third line we have used (\ref{3}) and in the fourth we inserted $p_{J}(\{ j
\}^c)=\frac{J_n-1}{J_n}$.\\ 
Besides the attempts of the eavesdropper to decode or identify messages we can introduce attacks
corresponding to each communication task introduced in \cite{ahlsw1}. 
It would be interesting, not only from the mathematical point of view, to see against which of them and to what extent secrecy can be guaranteed by the condition (\ref{eq:2}).

\section{Capacity results}\label{c}

\subsection{Preliminaries}

In what follows we use the notation as well as some properties of \emph{typical} and \emph{conditionally
  typical} sequences from \cite{csiskorn1}. For $p\in\mathcal{P}(A)$, $W:A\to\mathcal{P}(B)$, $x^n\in A^n$,
and $\delta>0$  we denote by $\mathcal{T}_{p,\delta}^n$ the set of typical sequences and by
$\mathcal{T}_{W,\delta}^n(x^n) $ the set of conditionally typical sequences given $x^n$ in the sense of \cite{csiskorn1}.\\
The basic properties of these sets that are needed in the sequel are summarised in the following three lemmata.
\begin{lemma}\label{typical}
Fixing $\delta > 0$, for every $p \in \mathcal{P}(A)$ and  $W:A \to \mathcal{P}(B)$ we have
\begin{eqnarray*}
p^{\otimes n}(\mathcal{T}_{p,\delta}^n) & \geq &1- (n+1)^{|A|} 2^{-nc\delta^2} \\
W^{\otimes n}(\mathcal{T}_{W,\delta}^n(x^n)|x^n) & \geq &1-(n+1)^{|A||B|} 2^{-nc \delta^2}
\end{eqnarray*}
for all $x^n\in A^n$ with $c=1/(2\ln 2)$. In particular, there is $n_0\in\mathbb{N}$ such that for each
$\delta>0$ and $p\in \mathcal{P}(A)$, $W:A\to\mathcal{P}(B)$ and $n>n_0$
\begin{eqnarray*}
 p^{\otimes n}(\mathcal{T}_{p,\delta}^n) & \geq &1- 2^{-nc'\delta^2} \\
W^{\otimes n}(\mathcal{T}_{W,\delta}^n(x^n)|x^n) & \geq &1- 2^{-nc' \delta^2}
\end{eqnarray*}
holds with $c'=\frac{c}{2}$.
\end{lemma}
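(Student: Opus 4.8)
The plan is to derive both estimates by the method of types of \cite{csis2} together with Pinsker's inequality, the latter being the source of the constant $c=1/(2\ln 2)$. Recall that $x^n\in\ty_{p,\delta}^n$ precisely when its type (empirical distribution) $q=q_{x^n}$ stays within $\delta$ of $p$ (with the support convention that $q(a)=0$ whenever $p(a)=0$), so that the complement $(\ty_{p,\delta}^n)^c$ is the disjoint union of the type classes $T_q^n$ taken over those types $q$ of length $n$ that deviate from $p$ by more than $\delta$, and in particular satisfy $\|q-p\|_1>\delta$ (the $\ell^1$/variational deviation, which exceeds $\delta$ under either the per-letter or the $\ell^1$ convention for typicality). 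I would then use the two standard facts of the method of types: the number of types of length $n$ on $A$ is at most $(n+1)^{|A|}$, and $p^{\otimes n}(T_q^n)\le 2^{-nD(q\|p)}$ with $D(\cdot\|\cdot)$ the relative entropy in bits.

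The crux is a uniform lower bound on the exponent over the atypical types. Pinsker's inequality gives $D(q\|p)\ge \frac{1}{2\ln 2}\|q-p\|_1^2$, hence $D(q\|p)>c\delta^2$ for every $q$ occurring in $(\ty_{p,\delta}^n)^c$. Summing $p^{\otimes n}(T_q^n)\le 2^{-nc\delta^2}$ over the at most $(n+1)^{|A|}$ relevant types yields the first inequality. This bound is automatically uniform in $p$, since neither the cardinality bound on the set of types nor Pinsker's constant depends on $p$.

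For the conditional estimate I would run the same argument on the conditional types ($V$-shells) of $y^n$ given $x^n$: writing $P=q_{x^n}$ for the type of $x^n$ and $V:A\to\pp(B)$ for a conditional type, the analogous bound reads $W^{\otimes n}(T_V(x^n)|x^n)\le 2^{-nD(V\|W|P)}$, where $D(V\|W|P)=\sum_{a}P(a)D(V(\cdot|a)\|W(\cdot|a))$ is the conditional relative entropy. Applying Pinsker row by row and then the convexity of $t\mapsto t^2$ to pass from the $P$-average of the rows back to the joint deviation, one again obtains $D(V\|W|P)>c\delta^2$ for every conditionally atypical $V$; since a conditional type is specified by at most $|A||B|$ entries, the prefactor becomes $(n+1)^{|A||B|}$, and the bound is uniform in both $x^n$ and $W$. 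This averaging-over-rows step is the only point that genuinely goes beyond the unconditional case, and is where I expect the mild technical obstacle to lie.

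Finally, the ``in particular'' refinement follows by absorbing the polynomial prefactors into the exponent: once $n$ is large one has $(n+1)^{|A|}\le 2^{nc\delta^2/2}$ and $(n+1)^{|A||B|}\le 2^{nc\delta^2/2}$, which leaves the cleaner bound with $c'=c/2$. The one subtlety worth flagging is the reading of the quantifiers: the threshold beyond which the prefactor is dominated genuinely depends on $\delta$ (and on $|A|,|B|$), since for very small $\delta$ the polynomial factor dominates over a long range of $n$; thus $n_0$ is to be understood as $n_0(\delta)$, after which the remainder is routine bookkeeping.
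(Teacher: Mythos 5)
Your proof is correct and is essentially the argument the paper itself invokes: the paper's own proof is a one-line appeal to the ``standard Bernstein--Sanov trick'' (type counting, the bound $p^{\otimes n}(T_q^n)\le 2^{-nD(q\|p)}$, its conditional analogue, and Pinsker's inequality), with details deferred to the cited references, and this is exactly what you carry out, including the row-by-row Pinsker plus convexity step in the conditional case. Your reading of the quantifier as $n_0=n_0(\delta)$ is also the right one, since for fixed $n$ and sufficiently small $\delta$ the polynomial prefactor cannot be absorbed into the exponent (the typical set may even be empty), so the lemma's uniform-in-$\delta$ phrasing is a harmless imprecision rather than a flaw in your argument.
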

\begin{proof}
 Standard Bernstein-Sanov trick using the properties of types from  \cite{csiskorn1} and Pinsker's inequality. 
The details can be found in \cite{wyrem} and references therein for example.
\end{proof}
Recall that for $p\in\mathcal{P}(A)$ and $W:A\to\mathcal{P}(B)$, $pW\in\mathcal{P}(B)$ denotes the output
distribution generated by $p$ and $W$ and that  $x^n \in \mathcal{T}^n_{p,\delta}$ and $y^n \in
\mathcal{T}^n_{W,\delta}(x^n)$ imply that $y^n \in \mathcal{T}^n_{pW,2|A|\delta}$. 
\begin{lemma}\label{alpha-beta} 
Let $x^n\in \mathcal{T}^n_{p,\delta}$, then for $V:A\to\mathcal{P}(C)$
\begin{eqnarray*}
|\mathcal{T}_{pV,2|A|\delta}^n| &\leq& \alpha^{-1}\\
V^n(z^n|x^n) &\leq& \beta \quad \textrm{for all} \quad z^n \in \mathcal{T}^n_{V,\delta}(x^n)
\end{eqnarray*} 
hold where
\begin{eqnarray}
\alpha &=&2^{-n(H(pV)+f_1(\delta))}\label{eq:8}\\
\beta &=&2^{-n(H(V|p)-f_2(\delta))}
\end{eqnarray}
with universal $f_1(\delta),f_2(\delta)>0$ satisfying
$\lim_{\delta\to 0}f_1(\delta)=0=\lim_{\delta\to 0}f_2(\delta)$. 
\end{lemma}
\begin{proof}
Cf. \cite{csiskorn1}.
\end{proof}
In addition we need a further lemma which will be used to determine the rates at which reliable
transmission to the legitimate receiver is possible.
\begin{lemma}\label{output-bound}
Let $p, \tilde{p} \in \mathcal{P}(A)$ and two stochastic matrices $W, \widetilde{W}:A \to \mathcal{P}(B)$
be given. Further let $q \in \mathcal{P}(B)$ be the output distribution generated by $p$ and
$W$. Fix $\delta \in (0,\frac{1}{4|A||B|})$. Then for every $n \in \mathbb{N}$
\begin{equation*}
q^{\otimes n}(\mathcal{T}^n_{\widetilde{W}, \delta}(\tilde{x}^n)) \leq (n+1)^{|A||B|}
  2^{-n(I(\tilde{p},\widetilde{W})-f(\delta))}
\end{equation*}
for all $\tilde{x}^n \in \mathcal{T}^n_{\tilde{p},\delta}$
holds for a universal $f(\delta) >0$ and $\lim_{\delta\to 0} f(\delta)=0$.
\end{lemma}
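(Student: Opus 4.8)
The plan is to prove both inequalities by the same argument, so I focus on the first one; the second follows verbatim by replacing $\widetilde{W},\tilde p,\tilde x^n$ with $W,p,x^n$. The guiding idea is that a typical output string under the product distribution $q^{\otimes n}$ has probability that is exponentially small, roughly $2^{-nH(q)}$ per string, while the conditionally typical set $\ty^n_{\widetilde{W},\delta}(\tilde x^n)$ is not too large, of size roughly $2^{nH(\widetilde{W}|\tilde p)}$; multiplying these and noting that $H(q)-H(\widetilde{W}|\tilde p)=I(\tilde p,\widetilde{W})$ produces the claimed bound. So the proof is a counting argument: bound the number of strings in the conditionally typical set, and bound the $q^{\otimes n}$-mass of each such string uniformly.

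First I would estimate, for each fixed $z^n\in\ty^n_{\widetilde{W},\delta}(\tilde x^n)$, the probability $q^{\otimes n}(z^n)$ from above. Since $\tilde x^n\in\ty^n_{\tilde p,\delta}$ and $z^n\in\ty^n_{\widetilde{W},\delta}(\tilde x^n)$, the remark preceding Lemma \ref{alpha-beta} gives $z^n\in\ty^n_{\tilde q,2|A|\delta}$, where $\tilde q=\tilde p\widetilde{W}$. On this typical set the empirical letter frequencies of $z^n$ are within $2|A|\delta$ of $\tilde q$, so $q^{\otimes n}(z^n)=\prod_{b\in B}q(b)^{N(b|z^n)}$ can be bounded by $2^{-n(H(\tilde q)-g(\delta))}$ for a suitable $g(\delta)\to 0$; the cross-entropy/type estimate here is exactly the kind of standard computation recorded in \cite{csis2}, and I would invoke it rather than redo it. Second I would bound the cardinality $|\ty^n_{\widetilde{W},\delta}(\tilde x^n)|$: the standard type bound gives $|\ty^n_{\widetilde{W},\delta}(\tilde x^n)|\le (n+1)^{|A||B|}2^{n(H(\widetilde{W}|\tilde p)+g'(\delta))}$, again from \cite{csis2}. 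Multiplying the per-string bound by the cardinality yields
\begin{equation*}
q^{\otimes n}(\ty^n_{\widetilde{W},\delta}(\tilde x^n))\le (n+1)^{|A||B|}2^{-n(H(\tilde q)-H(\widetilde{W}|\tilde p)-g(\delta)-g'(\delta))},
\end{equation*}
and since $H(\tilde q)-H(\widetilde{W}|\tilde p)=H(\tilde p\widetilde{W})-H(\widetilde{W}|\tilde p)=I(\tilde p,\widetilde{W})$, setting $f(\delta):=g(\delta)+g'(\delta)$ gives the statement, with $f(\delta)\to 0$ as $\delta\to 0$.

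The one subtlety I would watch carefully is that the mutual information appearing in the exponent is $I(\tilde p,\widetilde{W})$, built from the \emph{true} parameters $\tilde p,\widetilde{W}$ of the conditionally typical set, whereas the measuring distribution $q^{\otimes n}$ is generated by the \emph{possibly different} pair $p,W$. The point is that $q$ enters only through the upper bound $q^{\otimes n}(z^n)\le 2^{-n(H(\tilde q)-g(\delta))}$, and what one actually needs is that the empirical distribution of every $z^n\in\ty^n_{\widetilde{W},\delta}(\tilde x^n)$ is close to $\tilde q$, which is guaranteed by $z^n\in\ty^n_{\tilde q,2|A|\delta}$ independently of $q$. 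So the mismatch between $q$ and $\tilde q$ is absorbed into the slack $g(\delta)$, and no relation between $p,W$ and $\tilde p,\widetilde{W}$ is required. Keeping the roles of the two distributions cleanly separated is the main thing to get right; the rest is the bookkeeping of collecting $g(\delta)$ and $g'(\delta)$ into a single $f(\delta)$ and checking it vanishes with $\delta$, which is immediate from the convergence properties of the type estimates in \cite{csis2}. The constraint $\delta\in(0,\tfrac{1}{4|A||B|})$ is exactly what ensures the relevant frequency windows stay within the range where these estimates are valid.
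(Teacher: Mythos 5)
Your proof is correct. Note that there is nothing in the paper to compare it against: the paper's ``proof'' of Lemma \ref{output-bound} is literally the citation ``Cf.\ \cite{wyrem}'', so you have supplied the argument the paper omits, and what you wrote is the standard type-counting proof that such references contain. Its two pillars are exactly right: (i) every $z^n\in\ty^n_{\widetilde W,\delta}(\tilde x^n)$ with $\tilde x^n\in\ty^n_{\tilde p,\delta}$ lies in $\ty^n_{\tilde q,2|A|\delta}$, and writing $P_{z^n}$ for the type of $z^n$ one has $-\frac{1}{n}\log q^{\otimes n}(z^n)=H(P_{z^n})+D(P_{z^n}\|q)\ge H(P_{z^n})\ge H(\tilde q)-g(\delta)$; (ii) $|\ty^n_{\widetilde W,\delta}(\tilde x^n)|\le (n+1)^{|A||B|}2^{n(H(\widetilde W|\tilde p)+g'(\delta))}$; multiplying and using $H(\tilde q)-H(\widetilde W|\tilde p)=I(\tilde p,\widetilde W)$ gives the claim. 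Your discussion of the mismatch is the one point that genuinely needs care, and you resolve it correctly, though I would phrase it more sharply than ``absorbed into the slack'': the measuring distribution $q$ drops out of the exponent entirely because the only place it appears is the divergence term $D(P_{z^n}\|q)\ge 0$, which is discarded, so no relation whatsoever between $(p,W)$ and $(\tilde p,\widetilde W)$ is needed. Finally, your reading of the hypothesis on $\delta$ is accurate: it ensures $\|P_{z^n}-\tilde q\|\le 2|A||B|\delta\le\frac{1}{2}$, which is precisely the regime in which the uniform-continuity estimate for the entropy (Lemma 2.7 of \cite{csis2}, invoked elsewhere in the paper) is valid, so the functions $g(\delta)$, $g'(\delta)$, and hence $f(\delta)=g(\delta)+g'(\delta)$, vanish as $\delta\to 0$.
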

\begin{proof}
The proof can be found in \cite{wyrem} but is given here for the sake of completeness. 
Let $\tilde{x}^n \in \mathcal{T}^n_{\tilde{p},\delta}$ and $y^n \in \mathcal{T}^n_{\widetilde{W},
  \delta}(\tilde{x}^n)$. Then with the empirical distribution $p_{y^n}(b)=\frac{N(b|y^n)}{n}$, $b \in B$
it follows  by Lemma $2.6$ in \cite{csiskorn1} that
\begin{equation*}
q^n(y^n)=2^{-n(D(p_{y^n}||q)+H(p_{y^n}))} \leq 2^{-nH(p_{y^n})},
\end{equation*}
where the inequality holds, since $D(p_{y^n}||q) \geq 0$. By Lemma $2.10$ in \cite{csiskorn1}, because
$\tilde{x}^n \in \mathcal{T}^n_{\tilde{p},\delta}$ and $y^n \in \mathcal{T}^n_{\widetilde{W},
  \delta}(\tilde{x}^n)$, it follows that $y^n \in \mathcal{T}^n_{\tilde{q},2|X|\delta}$, where $\tilde{q}$
is the output distribution generated by $\tilde{p}$ and $\tilde{W}$, and thus
\begin{equation*}
\sum_{b \in B}| p_{y^n}(b)-\tilde{q}(b)| \leq 2|A||B|\delta
\end{equation*}
By the continuity of the entropy function it follows by $2.7$ in \cite{csiskorn1} that
\begin{equation*}
|H(p_{y^n})- H(\tilde{q})| \leq -2|A||B|\delta \log \frac{2|A||B|\delta}{|B|}=:\varphi(\delta)
\end{equation*}
with $\lim_{\delta\to 0} \varphi (\delta)=0$.
By the last two inequalities we obtain that 
\begin{equation}\label{eq:result}
q^n( \mathcal{T}^n_{\widetilde{W}, \delta}(\tilde{x}^n))
\leq | \mathcal{T}^n_{\widetilde{W}, \delta}(\tilde{x}^n) | 2^{-n(H(\tilde{q}) -\varphi(\delta))}.
\end{equation}
By the proof of Lemma $2.13$ it follows that
\begin{equation*}
| \mathcal{T}^n_{\widetilde{W}, \delta}(\tilde{x}^n) | \leq (n+1)^{|A||B|} 2^{n(H(\widetilde{W}|\tilde{p})+\psi(\delta))}
\end{equation*}
with $\psi(\delta)>0$ and $\lim_{\delta\to 0} \psi (\delta)=0$. Then from \eqref{eq:result} by defining
$f(\delta):=\varphi(\delta)+\psi(\delta)$ we end up with
\begin{equation*}
q^n( \mathcal{T}^n_{\widetilde{W}, \delta}(\tilde{x}^n))
\leq (n+1)^{|A||B|} 2^{-n(I(\tilde{p},\widetilde{W})-f(\delta))}
\end{equation*}
The assertion still holds if we replace $\widetilde{W}$ by $W$ and $\tilde{p}$ by $p$ throughout the
proof.
\end{proof}
The last lemma is a standard result from large deviation theory.
\begin{lemma}\label{chernoff}(Chernoff-Hoeffding bounds)
Let $Z_1,\ldots,Z_L$ be i.i.d. random variables with values in $[0,1]$ and expectation
$\mathbb{E}Z_i=\mu$, and $0<\epsilon<\frac{1}{2}$. Then it follows that
\begin{equation*}
Pr \left\{ \frac{1}{L} \sum^L_{i=1} Z_i \notin [(1\pm\epsilon)\mu]  \right\} \leq 2\exp \left( -L\cdot
\frac{\epsilon^2\mu}{3} \right), \nonumber
\end{equation*}
where $[(1\pm\epsilon)\mu]$ denotes the interval $[(1 - \epsilon)\mu, (1+ \epsilon)\mu]$.
\end{lemma}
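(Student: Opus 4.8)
The plan is to prove the two-sided estimate by splitting it into an upper- and a lower-tail bound, treating each by the exponential Markov (Chernoff) method, and finally absorbing both into the common exponent $\epsilon^2\mu/3$ together with the factor $2$ coming from the union bound. Writing $S_L:=\sum_{i=1}^{L}Z_i$, the event in question is the union of $\{S_L>(1+\epsilon)L\mu\}$ and $\{S_L<(1-\epsilon)L\mu\}$, so it suffices to bound each of these probabilities by $\exp(-L\epsilon^2\mu/3)$. For the upper tail I would fix a parameter $s>0$ and apply Markov's inequality to the nonnegative variable $e^{sS_L}$, obtaining
\[
\Pr\{S_L>(1+\epsilon)L\mu\}\le e^{-s(1+\epsilon)L\mu}\,\mathbb{E}\,e^{sS_L}.
\]
By the i.i.d.\ assumption the moment generating function factorises, $\mathbb{E}\,e^{sS_L}=(\mathbb{E}\,e^{sZ_1})^L$, which reduces everything to controlling the single-variable quantity $\mathbb{E}\,e^{sZ_1}$.

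Since each $Z_i$ takes values in $[0,1]$ and $z\mapsto e^{sz}$ is convex, I would bound it by the chord through its endpoints, $e^{sz}\le 1-z+ze^{s}$ for $z\in[0,1]$. Taking expectations and using $\mathbb{E}Z_1=\mu$ together with $1+x\le e^x$ gives $\mathbb{E}\,e^{sZ_1}\le 1+\mu(e^s-1)\le\exp(\mu(e^s-1))$. Substituting back yields
\[
\Pr\{S_L>(1+\epsilon)L\mu\}\le\exp\!\big(L\mu(e^s-1)-s(1+\epsilon)L\mu\big).
\]
Optimising the free parameter by choosing $s=\ln(1+\epsilon)$ minimises the exponent and produces the classical form $\exp\!\big(-L\mu[(1+\epsilon)\ln(1+\epsilon)-\epsilon]\big)$. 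The lower tail is handled symmetrically, either by running the argument with $s<0$ or by applying the same steps to the variables $1-Z_i$, giving $\exp\!\big(-L\mu[(1-\epsilon)\ln(1-\epsilon)+\epsilon]\big)$.

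The main obstacle, and really the only nonroutine point, is the final scalar reduction: one must verify the elementary inequalities
\[
(1+\epsilon)\ln(1+\epsilon)-\epsilon\ge\tfrac{\epsilon^2}{3}
\quad\text{and}\quad
(1-\epsilon)\ln(1-\epsilon)+\epsilon\ge\tfrac{\epsilon^2}{2}\ge\tfrac{\epsilon^2}{3}
\]
on the range $0<\epsilon<\tfrac12$, so that both tail probabilities are dominated by $\exp(-L\epsilon^2\mu/3)$. Each follows by setting the difference to zero at $\epsilon=0$ and checking the sign of the first two derivatives; it is precisely here that the restriction $\epsilon<\tfrac12$ enters, since for the upper-tail function the second derivative $\tfrac{1}{1+\epsilon}-\tfrac23$ is positive exactly when $\epsilon<\tfrac12$, which then forces monotone positivity of the difference. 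Adding the two one-sided bounds and using the union bound gives the stated factor $2$ and completes the proof.
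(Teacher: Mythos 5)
Your proof is correct. Note that the paper itself gives no proof of this lemma at all --- it is invoked with the remark that it is ``a standard result from large deviation theory'' --- so there is no argument to compare against; what you have written is precisely the canonical derivation behind that citation. Your steps all check out: the split into upper and lower tails, the exponential Markov bound with the chord estimate $e^{sz}\le 1+z(e^s-1)$ valid on $[0,1]$ by convexity, the optimizations $s=\ln(1+\epsilon)$ and $s=-\ln(1-\epsilon)$ yielding the exponents $(1+\epsilon)\ln(1+\epsilon)-\epsilon$ and $(1-\epsilon)\ln(1-\epsilon)+\epsilon$, and the two scalar inequalities reducing these to $\epsilon^2/3$. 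Your localisation of where the hypothesis $\epsilon<\tfrac12$ is actually needed is also accurate: for the lower tail, $(1-\epsilon)\ln(1-\epsilon)+\epsilon\ge\epsilon^2/2$ holds on all of $(0,1)$ (second derivative $\tfrac{\epsilon}{1-\epsilon}>0$, first derivative and value vanish at $0$), whereas for the upper tail the convexity argument for $(1+\epsilon)\ln(1+\epsilon)-\epsilon-\epsilon^2/3$ uses the sign of $\tfrac{1}{1+\epsilon}-\tfrac23$, which is positive exactly on $\epsilon<\tfrac12$; together with the union bound this gives the stated factor $2$ and the claimed exponent.
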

\begin{proof}
The proof is given in \cite{dubhapanc} (cf. Theorem $1.1$) and in \cite{ahlsw-winter}.
\end{proof}

\subsection{CSI at the transmitter}\label{csi}

First we consider the case in which the transmitter has full knowledge of the channel state (CSI) while the
legitimate receiver has no information about the channel state. The main result in this section is the following theorem.
\begin{theorem}\label{CSI-code}
The secrecy capacity of the compound wiretap channel $\mathfrak{W}$ with CSI at the transmitter is given by
\begin{equation*}
C_{S,CSI}(\mathfrak{W})= \min_{t \in \theta} \max_{U_t\rightarrow X_t \rightarrow (YZ)_t}(I(U_t,Y_t)-I(U_t,Z_t)).
\end{equation*}
\end{theorem}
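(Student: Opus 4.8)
The plan is to prove the two matching inequalities. The upper bound $C_{S,CSI}(\mathfrak{W})\le\min_{t\in\theta}\max_{V\to X\to(YZ)_t}(I(V,Y_t)-I(V,Z_t))$ is already available from \cite{liang}: any $(n,J_n)$ code with CSI yields, for each fixed $t$, a bona fide wiretap code for the single channel $(W_t,V_t)$, namely the encoder $E_t$ together with the universal decoding sets $\{D_j\}$, which by definition satisfies reliability with respect to $W_t$ and the secrecy constraint \eqref{eq:2} with respect to $V_t$. Hence its rate cannot exceed $\max_{V}(I(V,Y_t)-I(V,Z_t))$, and the minimum over $t$ gives the bound. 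Everything therefore reduces to achievability: for each $R<\min_t\max_V(I(V,Y_t)-I(V,Z_t))$ I must exhibit a sequence of codes. First I would remove the auxiliary variable by prefixing: for each $t$ let the maximiser be given by an input law $p_t$ and a stochastic prefix $T_t$, and replace $(W_t,V_t)$ by the induced channels $W_t\circ T_t$ and $V_t\circ T_t$ fed by $p_t$; the difference of mutual informations is unchanged, so it suffices to treat $V=X$, the encoder drawing codewords and applying $T_t$ symbolwise.

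Fix $R<\min_t R_t^\ast$ with $R_t^\ast:=I(p_t,W_t)-I(p_t,V_t)$, put $J_n=\lfloor 2^{nR}\rfloor$, and for each $t$ take a randomisation set $\mathcal{L}_{n,t}$ of size $\lfloor 2^{n(I(p_t,V_t)+\tau)}\rfloor$ with $\tau>0$ small, so that $R+I(p_t,V_t)+\tau<I(p_t,W_t)$ for all $t$. I would generate independent random codewords $X_{j,l}^{(t)}$, $j\in\mathcal{J}_n$, $l\in\mathcal{L}_{n,t}$, $t\in\theta$, each distributed as $p_t^{\otimes n}$ conditioned on $\ty^n_{p_t,\delta}$. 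The CSI encoder $E_t$ transmits, for message $j$ on channel $t$, a uniformly chosen $X_{j,l}^{(t)}$; the universal decoder sets $D_j$ equal to the set of $y^n$ lying in $\ty^n_{W_t,\delta}(X_{j,l}^{(t)})$ for at least one $l$ and one $t$, disjointified by priority to the smallest such $j$. For reliability, with true index $t_0$ and sent codeword $x=X_{j,l}^{(t_0)}$, Lemma \ref{typical} controls the event $y^n\notin\ty^n_{W_{t_0},\delta}(x)$, and the remaining error requires $y^n\in\ty^n_{W_t,\delta}(X_{j',l'}^{(t)})$ for some $j'\neq j$. Averaging over the codebook and using independence of the competing codeword, the packing estimate behind Lemma \ref{output-bound} gives $\Pr_{X'\sim p_t^{\otimes n}}\{y^n\in\ty^n_{W_t,\delta}(X')\}\le 2^{-n(I(p_t,W_t)-f(\delta))}$ uniformly in $y^n$, whence the expected error is at most
\[ \sum_{t\in\theta}J_n|\mathcal{L}_{n,t}|\,2^{-n(I(p_t,W_t)-f(\delta))}\le\sum_{t\in\theta}2^{-n(I(p_t,W_t)-I(p_t,V_t)-R-\tau-f(\delta))}, \]
which is exponentially small for $\delta,\tau$ small. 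The decisive point is that this estimate neither requires the decoder to know $t_0$ nor to resolve the randomisation index $l$: correctness on the true channel follows for free from $\ty^n_{W_{t_0},\delta}(x)\subset D_j$, while the false-alarm exponent $I(p_t,W_t)$ is correct for the codewords of bin $t$ irrespective of which channel actually produced $y^n$.

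For secrecy I would exploit that the eavesdropper's index coincides with the legitimate one and that the encoder knows it, so the analysis decouples over $t$. Fixing $t$, set $\Theta_j^{(t)}:=|\mathcal{L}_{n,t}|^{-1}\sum_l V_t^{\otimes n}(\cdot\,|X_{j,l}^{(t)})$, which has mean $q_t^{\otimes n}$ with $q_t=p_tV_t$. Since $|\mathcal{L}_{n,t}|\approx 2^{n(I(p_t,V_t)+\tau)}$ exceeds the ratio $\beta/q_t^{\otimes n}(z^n)\approx 2^{nI(p_t,V_t)}$ controlled by Lemma \ref{alpha-beta}, the Chernoff bound of Lemma \ref{chernoff} applied for each typical output $z^n$, together with a union bound over the subexponentially many typical $z^n$, over $j$, and over the finite set $\theta$, produces a codebook with $\|\Theta_j^{(t)}-q_t^{\otimes n}\|\le 2^{-n\kappa}$ for all $j,t$. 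Following the strategy of \cite{devetak}, this uniform decoupling yields $\max_{t\in\theta}I(J;Z_t^n)\le 2^{-na}$ via the continuity of mutual information in the variational distance, so \eqref{eq:2} holds at exponential speed.

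Finally, because $\theta$ is finite and every failure probability is exponentially small, a union bound shows that one realisation of the random codebook meets the reliability bound for every $t$ and the secrecy bound for every $t$ at once; fixing it gives a deterministic code, and a standard expurgation passes from average to maximal error. Letting $\delta,\tau\to0$ and $R\uparrow\min_t R_t^\ast$ closes the gap to the converse and proves the theorem. I expect the main obstacle to be the reliability analysis of the universal decoder, i.e.\ verifying that taking the union of the conditionally typical decoding regions over all channel hypotheses $t$ keeps them essentially disjoint and that the packing exponent $I(p_t,W_t)$ is unaffected by the mismatch between the generating law $p_t$ and the true channel $W_{t_0}$; this is exactly what permits us to avoid decoding the randomisation parameter, which is in general impossible here as shown in Section \ref{sec:example1}.
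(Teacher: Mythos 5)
Your proposal is correct and follows essentially the same route as the paper's proof: per-state codebooks of i.i.d.\ codewords drawn from $p_t$ truncated to the typical set, a universal decoder built by unioning conditionally typical sets over all channel states and randomisation indices, reliability via the packing bound of Lemma \ref{output-bound} with the key observation that the false-alarm exponent $I(p_s,W_s)$ is attached to the competing codeword's bin regardless of the true channel, secrecy via the Chernoff-bound soft-covering argument of \cite{devetak} followed by continuity of entropy, and finally a union bound over the finitely many states plus expurgation and prefixing. The only deviations are cosmetic --- disjointifying the decoder by smallest-index priority rather than by excluding overlaps, using the dual form of the packing estimate (probability over the random competing codeword for fixed $y^n$), prefixing at the outset rather than at the end, and one harmless slip: the typical outputs $z^n$ are exponentially (not subexponentially) many, which the doubly exponential Chernoff failure probability absorbs exactly as in the paper.
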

Here $X_t$ is a random variable with probability distribution in $\mathcal{P}(A)$ and $U_t$ is an auxiliary random
variable with range equals $A$, such that $U_t, X_t, (YZ)_t$ form a Markov chain $U_t\rightarrow X_t
\rightarrow (Y Z)_t$ in this order. Then the maximum refers to all random variables satisfying the Markov
chain condition such that $X_t$ is connected with $Y_t$ respective $Z_t$ by the channels $W_t$ respective
$V_t$ for every $t \in \theta$.

Notice first that the inequality
\[ C_{S,CSI}(\mathfrak{W})\le\min_{t \in \theta} \max_{U_t\rightarrow X_t \rightarrow (YZ)_t}(I(U_t,Y_t)-I(U_t,Z_t)) \]
is trivially true since we cannot exceed the secrecy capacity of the worst wiretap channel in the family
$\mathfrak{W}$. This has been already pointed out in \cite{liang}.
The rest of this section is devoted to the proof of the achievability.\\
\begin{proof}
It suffices to prove that
$\min_{t \in \Theta}(I(X_t, Y_t) -  I(X_t,Z_t))$
for $(XYZ)_t$ as above is an achievable secrecy rate. Then we will have shown that $R=\min_{t \in
  \Theta}(I(U_t, Y_t) -  I(U_t,Z_t))$, with $U_t \to X_t \to (YZ)_t$ form a Markov chain, is an achievable
secrecy rate (cf. \cite {csiskorn1} page $411$). We choose
$p_1, \ldots, p_T \in \mathcal{P}(A)$ and define new probability distributions on $A^n$ by
\begin{equation}\label{eq:10}
p'_t(x^n):= \left \{ \begin{array}{ll}
\frac{p^{\otimes n}_t(x^n)}{p^{\otimes n}_t(\mathcal{T}^n_{p_t,\delta})} & \textrm{if $x^n \in
  \mathcal{T}^n_{p_t,\delta}$},\\ 0 &  \textrm{otherwise}, 
\end{array} \right. .
\end{equation} 
Define then for $z^n\in C^n$, $x^n \in A^n$
\begin{equation*}
\tilde{Q}_{t,x^n}(z^n)=V_t^n(z^n|x^n)\cdot\mathbf{1}_{\mathcal{T}^n_{V_t,\delta}(x^n)}(z^n)
\end{equation*}
on $C^n$. Additionally, we set for $z^n\in C^n$
\begin{equation}\label{eq:11}
\Theta'_t(z^n) =\sum_{x^n \in \mathcal{T}^n_{p_t,\delta}} p'_t(x^n)\tilde{Q}_{t,x^n}(z^n).
\end{equation}
Now let $S:=\{z^n \in C^n : \Theta'_t(z^n) \geq \epsilon\alpha_t\}$ where
$\epsilon=2^{-nc'\delta^2}$ (cf. Lemma \ref{typical}) and $\alpha_t$ is from (\ref{eq:8}) in
Lemma \ref{alpha-beta} computed with respect to $p_t$ and $V_t$. By lemma $3.2$ the support of
$\Theta'_t$ has cardinality $\leq \alpha^{-1}_t$ since for each $x^n \in \mathcal{T}^n_{p_t,\delta}$ it holds
that $\mathcal{T}^n_{V_t,\delta}(x^n) \subset \mathcal{T}^n_{p_tV_t, 2|A|\delta}$, which implies that
$\sum_{z^n \in S} \Theta_t(z^n) \geq 1-2\epsilon$, if 
\begin{eqnarray}
\Theta_t(z^n)&=&\Theta'_t(z^n)\cdot\mathbf{1}_S(z^n) \quad \textrm{and} \nonumber\\
Q_{t,x^n}(z^n)&=&\tilde{Q}_{t,x^n}(z^n) \cdot \mathbf{1}_S(z^n).\label{eq:12}
\end{eqnarray}
Now for each $t \in \theta$ define $ J_n  \cdot L_{n,t} $ i.i.d. random
variables $X^{(t)}_{jl}$ with $j \in [J_n]:= \{1,\ldots, J_n   \}$ and $l \in [L_{n,t}]:=\{1,\ldots,
L_{n,t}  \}$ each of them distributed according to $p'_t $ with 
\begin{eqnarray}
J_n&=& \left\lfloor2^{n[\min_{t \in \theta}(I(p_t,W_t)-I(p_t,V_t))-\tau]}\right\rfloor\label{eq:13} \\
L_{n,t}&=&\left\lfloor 2^{n[I(p_t,V_t)+\frac{\tau}{4}]}\right\rfloor\label{eq:14}
\end{eqnarray} 
for $\tau>0$. Moreover we suppose that the random matrices $\{X^{(t)}_{j,l}\}_{j\in[J_n],l\in[L_{n,l}]}  $ and
$ \{X^{(t')}_{j,l}\}_{j\in[J_n],l\in[L_{n,l}]}   $ are independent for $t\neq t'$.
Now it is obvious from (\ref{eq:11}) and the definition of the set $S$ that for any $z^n\in S$
$\Theta_t(z^n)=\mathbb{E}Q_{t,X^{(t)}_{jl}}(z^n)\ge \epsilon \alpha_t$ if $\mathbb{E}$ is the expectation
value with respect to the distribution $p'_t$. For the random variables $\beta^{-1}_t
Q_{t,X^{(t)}_{jl}}(z^n)$ define the event
\begin{equation}\label{eq:15}
\iota_j(t)=\bigcap_{z^n \in C^n} \left\{\frac{1}{L_{n,t}}\sum_{l=1}^{L_{n,t}} Q_{t,X^{(t)}_{jl}}(z^n) \in [(1 \pm
  \epsilon) \Theta_t(z^n)]\right \},
\end{equation}
and keeping in mind that  $\Theta_t(z^n) \geq \epsilon \alpha_t$ for all $z^n \in S$ it follows that
for all $j \in [J_n]$ and for all $t \in \theta$
\begin{equation}\label{eq:16}
\textrm{Pr}\{ (\iota_j(t))^c\} \leq 2 |C|^n
\exp  \Big(- L_{n,t} \frac{ 2^{-n[I(p_t,V_t)+g(\delta)]}}{3}   \Big)
\end{equation}
by Lemma \ref{chernoff}, Lemma \ref{alpha-beta}, and our choice $\epsilon=2^{-nc'\delta^2}$ with 
$g(\delta):=f_1(\delta)+f_2(\delta)+3c'\delta^2$. Making $\delta>0$ sufficiently small we have for all 
sufficiently large $n\in \mathbb{N}$
\[ L_{n,t} 2^{-n[I(p_t,V_t)+g(\delta)]}\ge 2^{n\frac{\tau}{8}}. \]
Thus, for this choice of $\delta$ the RHS of (\ref{eq:16}) is double exponential in $n$ uniformly in
$t\in\theta$   and can be made smaller than $\epsilon J_n^{-1}$ for all $j \in [J_n]$ and all
sufficiently large $n \in \mathbb{N}$. I.e. 
\begin{equation}\label{eq:17}
 \textrm{Pr}\{ (\iota_j(t))^c\} \leq \epsilon J_n^{-1} \quad \forall t\in\theta.
\end{equation} 
Let us turn now to the coding part of the problem.
Let $p'_t\in\mathcal{P}(A^n)$ be given as in \eqref{eq:10}. We abbreviate
 $\mathcal{X}:=\{X^{(t)}\}_{t\in \theta}$ for the family of random
matrices $X^{(t)}=\{ X_{jl}^{(t)} \}_{j\in [J_n], l\in [L_{n,t}]}$ whose components are i.i.d. according to
$p'_t$. We will show now how the reliable transmission of the message $j\in [J_n]$ can be achieved when
randomising over the index $l\in L_{n,t}$ without any attempt to decode the randomisation parameter at
the legitimate receiver (see section \ref{sec:example1}). To this end let us define for each $j\in[J_n]$ a random set
\begin{equation*}
 D'_j(\mathcal{X}):=\bigcup_{s\in\theta}\bigcup_{k\in[L_{n,s}]} \mathcal{T}_{W_s,\delta}^{n}(X_{jk}^{(s)}),
\end{equation*} 
and the subordinate random decoder $\{ D_j(\mathcal{X}) \}_{j\in[J_n]} \subseteq B^n$ is given by
\begin{equation}\label{eq:18}
 D_j(\mathcal{X}):=D'_j(\mathcal{X})\cap \bigg( \bigcup_{\substack{j'\in [J_n] \\ j'\neq j}}
   D'_{j'}(\mathcal{X}) \bigg)^c. 
\end{equation} 
Consequently we can define the random average probabilities of error for a specific channel $t\in \theta$ by
\begin{equation}\label{eq:19}
\lambda_n^{(t)}(\mathcal{X}):=\frac{1}{J_n}\sum_{j\in [J_n]}\frac{1}{L_{n,t}}\sum_{l\in[L_{n,t}]} 
  W_t^{\otimes n}((D_j(\mathcal{X}) )^{c}| X_{jl}^{(t)} ) .
\end{equation}
Now (\ref{eq:18}) implies for each $t\in \theta$ and $l\in [L_{n,t}]$
\begin{equation}\label{eq:20}
\begin{split}
 W_t^{\otimes n}& ((D_j(\mathcal{X}) )^{c}|  X_{jl}^{(t)} ) \\ 
&\le  W_t^{\otimes n}  (\bigcap_{s\in \theta } \bigcap_{k\in[L_{n,s}]}  
(\mathcal{T}_{W_s,\delta}^{n}(X_{jk}^{(s)}))^c| X_{jl}^{(t)}        ) 
+\sum_{\substack{ j'\in[J_n]\\ j'\neq j }}\sum_{s\in \theta}\sum_{k\in[L_{n,s}]}W_t^{\otimes n}(
\mathcal{T}_{W_s,\delta}^n  (X_{j'k}^{(s)})|X_{jl}^{(t)}     )  \\
&\le W_t^{\otimes n}( (\mathcal{T}_{W_t,\delta}^{\otimes n}( X_{jl}^{(t)} ))^c| X_{jl}^{(t)}  ) 
 +\sum_{\substack{ j'\in[J_n]\\ j'\neq j }}\sum_{s\in \theta}\sum_{k\in[L_{n,s}]}W_t^{\otimes n}(
\mathcal{T}_{W_s,\delta}^n  (X_{j'k}^{(s)})|X_{jl}^{(t)}     ),
\end{split}
\end{equation}
where the second inequality follows by the monotonicity of the probability. By Lemma \ref{typical} and
the independence of all involved random variables we obtain
\begin{equation}\label{eq:21}
\begin{split}
 &\mathbb{E}_{\mathcal{X}}  (  W_t^{\otimes n}((D_j(\mathcal{X}) )^{c}| X_{jl}^{(t)} )) \\
 & \leq (n+1)^{|A||B|}\cdot 2^{-nc\delta^2} \\
 & + \sum_{\substack{ j'\in[J_n]\\ j'\neq j }}\sum_{s\in \theta}\sum_{k\in[L_{n,s}]} \mathbb{E}_{X_{j'k}^{(s)}}
      \mathbb{E}_{X_{jl}^{(t)}}  W_t^{\otimes n}( \mathcal{T}_{W_s,\delta}^n  (X_{j'k}^{(s)})|X_{jl}^{(t)}     ).
\end{split}
\end{equation} 
We shall find now for $j'\neq j$ an upper bound on
\begin{equation}\label{eq:22}
\begin{split}
\mathbb{E}_{X_{jl}^{(t)}} &  W_t^{\otimes n}( \mathcal{T}^n_{W_s,\delta}
(X_{j'k}^{(s)})|X_{jl}^{(t)}     ) \\
 &= \sum_{x^n\in A^n} p'_t(x^n)  W_t^{\otimes n}( \mathcal{T}_{W_s,\delta}^n (X_{j'k}^{(s)})|x^n  ) \\
&\le \sum_{x^n\in A^n}\frac{p_t^{\otimes n} (x^n)}{p_t^{\otimes n}(\mathcal{T}_{p_t,\delta}^n)}
W_t^{\otimes n}( \mathcal{T}_{W_s,\delta}^n (X_{j'k}^{(s)})|x^n  )\\
&=  \frac{q_t^{\otimes n}(\mathcal{T}_{W_s,\delta}^n 
(X_{j'k}^{(s)})  )  }{ p_t^{\otimes n}(\mathcal{T}_{p_t,\delta}^n)}.
\end{split}
\end{equation}
By Lemma \ref{typical} and by Lemma \ref{output-bound} for any $t,s \in \theta$ we have
\begin{equation}\label{eq:23}
\begin{split}
p_t^{\otimes n}(\mathcal{T}_{p_t,\delta}^n) & \ge 1-(n+1)^{|A|}\cdot 2^{-nc\delta^2} \\
q_t^{\otimes n}(\mathcal{T}_{W_s,\delta}^n 
(X_{j'k}^{(s)})  ) & \le (n+1)^{|A||B|}\cdot 2^{-n (I(p_s,W_s) -f(\delta))} 
\end{split}
\end{equation}
with a universal $f(\delta)>0$ satisfying $\lim_{\delta\to 0}f(\delta)=0$ since $X_{j'k}^{(s)}\in 
\mathcal{T}_{p_s,\delta}^n $ with probability 1. Thus inserting this into (\ref{eq:22}) we obtain
\begin{equation*}
 \mathbb{E}_{X_{jl}^{(t)}}  W_t^{\otimes n}( \mathcal{T}_{W_s,\delta}^n (X_{j'k}^{(s)})|X_{jl}^{(t)}     )
      \le \frac{(n+1)^{|A||B|}}{1-(n+1)^{|A|}\cdot 2^{-nc\delta^2}}\cdot 2^{-n (I(p_s,W_s) -f(\delta))}
\end{equation*}
for all $s,t\in \theta$, all $j'\neq j$, and all $l\in [L_{n,t}], k\in [L_{n,s}]$. Now by defining
$\nu_n(\delta):=(n+1)^{|A||B|}\cdot 2^{-nc\delta^2}$ and $\mu_n(\delta):=1-(n+1)^{|A|}\cdot
2^{-nc\delta^2}$ thus  for each $t\in \theta$, $l\in [L_{n,t}]$, and  $j\in[J_n]$ (\ref{eq:21}) and
(\ref{eq:22}) lead to
\begin{equation}\label{eq:24}
\begin{split}
&\mathbb{E}_{\mathcal{X}}  (  W_t^{\otimes n}((D_j(\mathcal{X}) )^{c}| X_{jl}^{(t)} )) \\
& \le \nu_n(\delta)
+ \frac{(n+1)^{|A||B|}}{\mu_n(\delta)}J_n \sum_{s\in \theta} L_{n,s}  2^{-n (I(p_s,W_s) -f(\delta))} \\
&\le  \nu_n(\delta) 
+ \frac{(n+1)^{|A||B|}}{\mu_n(\delta)} J_n \sum_{s\in \theta} 2^{-n( I(p_s,W_s)-I(p_s,V_s)
  -f(\delta)-\frac{\tau}{4}  )    }\\
&\le \nu_n(\delta)  
+ \frac{(n+1)^{|A||B|}}{\mu_n(\delta)} T\cdot  J_n \cdot 
2^{-n (\min_{s\in\theta}( I(p_s,W_s)-I(p_s,V_s) ) -f(\delta)-\frac{\tau}{4}   )} \\
&\le \nu_n(\delta) 
 + \frac{(n+1)^{|A||B|}}{\mu_n(\delta)} T \cdot 2^{-n(\tau -f(\delta)-\frac{\tau}{4} )} \\
&\le \nu_n(\delta)
 + \frac{(n+1)^{|A||B|}}{\mu_n(\delta)} T \cdot 2^{-n\frac{\tau}{2}} 
\end{split}
\end{equation}
where we have used (\ref{eq:14}), (\ref{eq:13}), and we have chosen $\delta>0$ small enough to
ensure that $\tau -f(\delta)-\frac{\tau}{4}\ge \frac{\tau}{2} $. Defining $a=a(\delta,\tau):=\frac{\min\{ c\delta^2,
  \frac{\tau}{4}  \}}{2}$ we can find $n(\delta,\tau,|A|,|B|)\in\mathbb{N}$ such that  for all $n\ge n(\delta,\tau,|A|,|B|) $ 
\begin{equation*}
 \mathbb{E}_{\mathcal{X}}(  W_t^{\otimes n}((D_j(\mathcal{X}) )^{c}| X_{jl}^{(t)} ))\le T \cdot 2^{-n a}
\end{equation*} 
holds for all $t\in \theta$, $l\in [L_{n,t}]$, and $j\in [J_n]$. Consequently, for any $t\in \theta$ we obtain
\begin{equation*}
\mathbb{E}_{\mathcal{X}}  (\lambda^{(t)}_n(\mathcal{X})) \le T\cdot 2^{-na}.
\end{equation*} 
Additionally we define for any $t\in\theta$ an event
\begin{equation}\label{eq:25}
\iota_0 (t)=\{\lambda_n^{(t)}(\mathcal{X}) \leq \sqrt{T}2^{-n\frac{a}{2}}\}.
\end{equation}
Then using the Markov inequality applied to $\lambda_n^{(t)}(\mathcal{X})$ along with
(\ref{eq:25}), we obtain that
\begin{equation}\label{eq:26}
\textrm{Pr}\{ (\iota_0(t))^c\} \leq \sqrt{T}2^{-n\frac{a}{2}}.
\end{equation}
Set
\begin{equation}\label{eq:27}
\iota:=\bigcap_{t\in\theta}\bigcap_{k=0}^{J_n}\iota_k(t) 
\end{equation}
Then with (\ref{eq:17}), (\ref{eq:26}), and applying the union bound we obtain 
\begin{eqnarray*}
\textrm{Pr}\{ \iota^c\} 
&\leq &\sum_{t\in\theta}\sum_{k=0}^{J_n}\textrm{Pr}\{(\iota_k(t))^c\} \leq
 T \cdot \epsilon +T^{\frac{3}{2}}\cdot 2^{-n \frac{a}{2}} \nonumber\\
&\le &  T^2 \cdot 2^{-nc''}
\end{eqnarray*}
for a suitable positive constant $c''>0$ and all sufficiently large $n\in\mathbb{N}$.\\
Hence, we have shown that for each $t\in \theta$ there exist realisations 
$\{ (x^{(t)}_{jl})_{j\in[J_n], l\in [L_{n,t}]}: t\in\theta\}\in \iota$ of 
$\mathcal{X}$.
Now, denoting by $\| \cdot \|$ the variational distance 
\begin{equation*}
||p-q||:=\sum_{x \in A}|p(x)-q(x)|
\end{equation*} 
for $p,q \in A$, we show that the secrecy level is fulfilled uniformly in $t\in \theta$  for any particular 
$\{ (x^{(t)}_{jl})_{j\in[J_n], l\in [L_{n,t}]}: t\in\theta\}\in \iota$ .
\begin{equation}\label{eq:28}
\begin{split}
\left\| \frac{1}{L_{n,t}}\sum^{L_{n,t}}_{l=1} V^n_t(\cdot|x^{(t)}_{jl})-\Theta_t(\cdot)\right\|
 \leq &  \frac{1}{L_{n,t}}\sum^{L_{n,t}}_{l=1} \left\|
   V^n_t(\cdot|x^{(t)}_{jl})-\tilde{Q}_{t,x^{(t)}_{jl}}(\cdot)\right\| + \\
& + \left\| \frac{1}{L_{n,t}}\sum^{L_{n,t}}_{l=1} \big(\tilde{Q}_{t,x^{(t)}_{jl}}(\cdot)- Q_{t,x^{(t)}_{jl}}(\cdot)\big)
\right\| 
+ \left\| \frac{1}{L_{n,t}}\sum^{L_{n,t}}_{l=1} Q_{t,x^{(t)}_{jl}}(\cdot)-\Theta_t(\cdot)\right\| \leq 5\epsilon .
\end{split}
\end{equation}
In the first term the functions $V^n_t(\cdot |x_{jl}^{(t)})$ and $\tilde{Q}_{t,x_{jl}^{(t)} }(\cdot)$ differ if $z^n \notin
\mathcal{T}^n_{p_tV_t,2|A|\delta}$, so it makes a contribution of $\epsilon$ to the bound. In the second term
$\tilde{Q}_t$ and $Q_t$ are different for $z^n \notin S$ and because $\iota_j(t)$ and $\sum_{z^n \in S}
\Theta_t(z^n) \geq 1 - 2\epsilon$ imply that
\begin{equation*}
\frac{1}{L_{n,t}}\sum^{L_{n,t}}_{l=1} \sum_{z^n \in S} Q_{t,x^{(t)}_{jl}}(z^n) \geq 1-3\epsilon,
\end{equation*}
the second term is bounded by $3\epsilon$. The third term is bounded by $\epsilon$ which follows directly
from \eqref{eq:15}.\\
For any $\{ (x^{(t)}_{jl})_{j\in[J_n], l\in [L_{n,t}]}: t\in\theta\}\in \iota$ with the corresponding decoding sets
$\{ D_j: j\in [J_n] \}$ it follows by construction that
\begin{equation}\label{eq:29}
\frac{1}{J_n}\sum_{j\in[J_n]}\frac{1}{L_{n,t}}\sum_{l\in [L_{n,t}]}W^{\otimes n}_t(D_j^c|
 x^{(t)}_{jl})\le \sqrt{T}
 \cdot 2^{-n a'}
\end{equation} 
is fulfilled for all $t \in \theta$ with $a'>0$, which means that we have found a $(n,J_n)$ code with
average error probability tending to zero for $n \in \mathbb{N}$ sufficiently large for any channel
realisation. Now by a standard expurgation scheme we show that this still holds for the maximum error
probability. We define the set
\begin{equation}\label{eq:30}
G_t:=\{j \in J_n: \frac{1}{L_{n,t}}\sum_{l\in [L_{n,t}]}W^{\otimes n}_t(D_j^c|x^{(t)}_{jl}) \leq \sqrt{\eta}  \}
\end{equation} 
with $\eta:=\sqrt{T}\cdot 2^{-na'}$ and denote its complement as $B_t:=G_t^c$ and the union of all
complements as $B = \bigcup_{t \in \theta} B_t$. Then \eqref{eq:29} and \eqref{eq:30} imply that
\begin{equation*}
\eta \geq \frac{1}{J_n}\sum_{j\in[J_n]}  \frac{1}{L_{n,t}}\sum_{l\in [L_{n,t}]}W^{\otimes n}_t(D_j^c|
 x^{(t)}_{jl}) \geq \frac{|B_t|}{J_n} \sqrt{\eta}
\end{equation*}
for all $t \in \theta$ and by the union bound it follows that
\begin{equation*}
|B| \leq \sum_{t \in \theta} |B_t| \leq T\cdot \sqrt{\eta} \cdot J_n .
\end{equation*}
After removing all $j\in B$ (which are at most a fraction of $T^{\frac{5}{4}}
2^{-n\frac{a'}{2}}$ of $J_n$) and relabeling we obtain a new $(n,\tilde{J}_n)$ code $(E_j,D_j)_{j \in
  [\tilde{J}_n]}$ without changing the rate. The maximum error probability of the new code fulfills for
  sufficiently large $n \in \mathbb{N}$
\begin{equation*}
\max_{t \in \theta} \, \max_{j \in [\tilde{J_n}]} \frac{1}{L_{n,t}} \sum_{l \in [L_{n,t}]} W^{\otimes n}_t(D^c_j |
x^{(t)}_{jl}) \leq T^{\frac{1}{4}} \cdot 2^{-n\frac{a'}{2}}.
\end{equation*}
On the other hand, if we set
\begin{equation}\label{eq:31}
 \hat{V}^n_t(z^n|(j,l)):= V^n_t(z^n|x^{(t)}_{jl})
\end{equation}
and further define
\begin{eqnarray}
\hat{V}^n_{t,j}(z^n) & = & \frac{1}{L_{n,t}}\sum ^{L_{n,t}}_{l=1}\hat{V}^n_t(z^n|(j,l))\;  ,
\label{eq:32} \\
\bar{V}^n_t(z^n) & = & \frac{1}{\tilde{J}_n}\sum^{\tilde{J}_n}_{j=1}  \hat{V}^n_{t,j}(z^n), \label{eq:33}
\end{eqnarray}
we obtain that
\begin{eqnarray*}
\|\hat{V}^n_{t,j}-\bar{V}^n_t\| &\leq & \|  \hat{V}^n_{t,j}-\Theta_t\| +  \| \Theta_t- \bar{V}^n_t\|\nonumber\\
&\le & 10 \epsilon,
\end{eqnarray*}
for all  $j\in [\tilde{J}_n], t\in \theta$ with $\epsilon=2^{-nc'\delta^2}$ where we have used the convexity
of the variational distance and \eqref{eq:28} which still applies by our expurgation procedure. For a uniformly distributed
random variable $J$ taking values in the set $\{1, \ldots , \tilde{J}_n\}$ we
obtain with Lemma $2.7$ of \cite{csiskorn1} (uniform continuity of the entropy function)
\begin{eqnarray*}
I(J;Z^n_t) &=& \sum^{J_n}_{j=1} \frac{1}{\tilde{J}_n} (H(\bar{V}_t^n)-H(\hat{V}_{t,j}^n))\\
               &=& H(Z^n_t)-H(Z^n_t|J)\\
                &\leq&-10\epsilon\log(10\epsilon)+10n\epsilon\log|C|
\end{eqnarray*}
uniformly in $t\in \theta$ (for $10\epsilon\leq e^{-1}$). Hence
the strong secrecy level of the definition \ref{code} holds uniformly in $t\in\theta$.
Using standard arguments (cf. \cite{csiskorn1} page $411$) we then have shown the achievability of
the secrecy rate
\begin{equation}\label{eq:34}
R_S=\min_{t \in \theta} \max_{U_t\rightarrow X_t \rightarrow (YZ)_t}(I(U_t,Y_t)-I(U_t,Z_t)).
\end{equation}
\end{proof}
\emph{Remark.} Note that in the case that $\mathfrak{W}:=\{ W_t,V_s: t=1,\ldots T, s=1,\ldots S\}$ with
$S\neq T$ and the pair $(s,t)$ known to the transmitter prior to transmission nothing new happens. A
slight modification of the arguments presented above shows that
\[C_{S,CSI}(\mathfrak{W})= \min_{(t,s)} \max_{U\rightarrow X \rightarrow (Y_t Z_s)}(I(U,Y_t)-I(U,Z_s)). \]

\subsection{No CSI}\label{sec:no-csi}

In the previous section we have assumed that the channel state is known to the transmitter. We now
consider the case where neither the transmitter nor the receiver has knowledge of the channel state. We
will prove that 
\begin{theorem}\label{low-bound}
For the secrecy capacity $C_S(\mathfrak{W})$ of the compound wiretap channel $\mathfrak{W}$ without CSI
it holds that
\begin{equation*}\label{eq:ratebound}
 C_S(\mathfrak{W})\ge \max_{p\in\mathcal{P}(A)}( \min_{t \in \theta}I(p,W_t)-\max_{t \in \theta}I(p,V_t)). 
\end{equation*}
\end{theorem}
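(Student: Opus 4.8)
The plan is to mimic the achievability proof of Theorem~\ref{CSI-code} almost verbatim, exploiting the crucial fact that the decoder constructed there in \eqref{eq:18} is already \emph{universal}, i.e.\ independent of the channel index $t$. The only genuine changes forced by the absence of CSI are that we may no longer tune the input distribution to the channel and that a single randomisation block must serve all eavesdropper channels at once. Accordingly I would fix one $p\in\pp(A)$, define $p'$ by \eqref{eq:10} with this single $p$ in place of $p_t$, and generate a single family of $J_n\cdot L_n$ i.i.d.\ codewords $X_{jl}\sim p'$, $j\in[J_n]$, $l\in[L_n]$, with
\[
J_n=\left\lfloor 2^{n[\min_{t\in\theta}I(p,W_t)-\max_{t\in\theta}I(p,V_t)-\tau]}\right\rfloor,
\qquad
L_n=\left\lfloor 2^{n[\max_{t\in\theta}I(p,V_t)+\frac{\tau}{4}]}\right\rfloor .
\]
The point of these choices is that $J_nL_n\approx 2^{n[\min_t I(p,W_t)-\frac{3\tau}{4}]}$ stays below the compound-channel capacity $\min_t I(p,W_t)$ of the legitimate link, which is exactly what makes one universal decoder succeed for all $W_t$ simultaneously.

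For reliability I would reuse the universal decoder of \eqref{eq:18} with this single codeword family and repeat the estimates \eqref{eq:20}--\eqref{eq:24} essentially unchanged. The only modification is bookkeeping: the double sum $J_n\sum_{s}L_{n,s}2^{-n(I(p_s,W_s)-f(\delta))}$ appearing in \eqref{eq:24} becomes $J_nL_n\sum_{s\in\theta}2^{-n(I(p,W_s)-f(\delta))}\le T\,J_nL_n\,2^{-n(\min_{s}I(p,W_s)-f(\delta))}$, which by the choice of $J_n$ and $L_n$ telescopes to $T\,2^{-n(\frac{3\tau}{4}-f(\delta))}$ and hence decays for $\delta$ small. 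The expurgation of \eqref{eq:30} then upgrades the average to the maximum error probability without changing the rate.

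For secrecy I would form, for each $t\in\theta$ separately, the quantities $\tilde Q_{t,x^n}$, $\Theta'_t$, the set $S$, and $\Theta_t$, $Q_{t,x^n}$ exactly as in \eqref{eq:11}--\eqref{eq:12}, but all computed from the single distribution $p$ and the channel $V_t$. The concentration event \eqref{eq:15} now involves the single block length $L_n$; the decisive observation is that $L_n\ge 2^{n[I(p,V_t)+\frac{\tau}{4}]}$ holds for \emph{every} $t\in\theta$, so the Chernoff estimate \eqref{eq:16} yields $L_n2^{-n[I(p,V_t)+g(\delta)]}\ge 2^{n\frac{\tau}{8}}$ and therefore \eqref{eq:17} uniformly in $t$. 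From here the variational-distance bound \eqref{eq:28} and the entropy-continuity argument leading to $\max_{t\in\theta}I(J;Z_t^n)\to 0$ go through verbatim, and letting $\tau\downarrow 0$ and optimising over $p$ gives the claimed bound.

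The step I expect to be the real obstacle is reconciling the two opposing demands placed on the \emph{single} randomisation parameter $L_n$: to defeat the eavesdropper it must be at least $2^{n\max_t I(p,V_t)}$ (the worst $V_t$ dictates how much randomness is needed), while to keep the legitimate receiver able to decode all $J_nL_n$ codewords it must satisfy $J_nL_n\lesssim 2^{n\min_t I(p,W_t)}$ (the worst $W_t$ caps the total codebook size). These constraints are jointly feasible with a positive message rate precisely when $\min_t I(p,W_t)-\max_t I(p,V_t)>0$, and it is this mismatch between a worst-case legitimate channel and a worst-case eavesdropper channel that produces the expression $\min_t I(p,W_t)-\max_t I(p,V_t)$ rather than the stronger CSI quantity $\min_t(I(p,W_t)-I(p,V_t))$ of Theorem~\ref{CSI-code}; checking that both Chernoff arguments (reliability and secrecy) survive under one common $L_n$ is the crux.
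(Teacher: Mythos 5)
Your proposal is correct and follows essentially the same route as the paper's own proof: a single input distribution $p$, one codeword family with exactly the parameters \eqref{eq:35}--\eqref{eq:36}, the universal decoder of \eqref{eq:18}, and the observation that the maximum over $t$ in the definition of $L_n$ makes the Chernoff secrecy estimate uniform in $t$ while the reliability estimate telescopes as you describe. The crux you single out --- one common $L_n$ satisfying both $L_n \gtrsim 2^{n\max_t I(p,V_t)}$ for secrecy and $J_nL_n \lesssim 2^{n\min_t I(p,W_t)}$ for universal decoding --- is precisely the point the paper itself emphasises ("we need the maximum in the definition of $L_n$").
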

\smallskip
\begin{proof}
Caused by the lack of channel knowledge we use a stochastic encoder independent of the channel
realisation. For any $p\in\mathcal{P}(A)$ let $p' \in \mathcal{P}(A^n)$ be the distribution given by
\begin{equation*}
p'(x^n):= \left \{ \begin{array}{ll}
\frac{p^{\otimes n}(x^n)}{p^{\otimes n}(\mathcal{T}^n_{p,\delta})} & \textrm{if $x^n \in \mathcal{T}^n_{p,\delta}$},\\
0 &  \textrm{otherwise}.
\end{array} \right.
\end{equation*}
Then  analogously to the case with CSI we define $\tilde{Q}_{t,x^n}(z^n),
{Q}_{t,x^n}(z^n)$, and $\Theta'_t(z^n), \Theta_t (z^n)$ for $z_n \in C^n$ but now
with respect to the distribution $p'$. Consequently, $\Theta'(\cdot)$ has support only on
$\mathcal{T}^n_{pV_t,2|A|\delta}$, and ${Q}_{t,x^n}(\cdot)$ and $\Theta(\cdot)$ only on the
set $S$. Furthermore $\Theta(z^n) \geq \epsilon \alpha_t$ for all $z^n \in S$. Now define $J_n\cdot L_n$
i.i.d random variables $X_{jl}$ according to the distribution $p'$ independent of $t \in \theta$ with
$j\in [J_n]$ and $l \in [L_n]$  with
\begin{eqnarray}
J_n&=& \lfloor 2^{n[\min_tI(p,W_t)-\max_tI(p,V_t)-\tau]} \rfloor \label{eq:35}\\
L_{n}&=& \lfloor 2^{n[\max_t I(p,V_t)+ \frac{\tau}{4}] } \rfloor \label{eq:36}
\end{eqnarray}
for $\tau>0$. Now because $\Theta_t(z^n)=\mathbb{E}Q_{t,X_{jl}} \geq \epsilon \alpha_t$ for all $z^n \in
S$ we define the event $\iota_j(t)$  as in \eqref{eq:15}  for the random variables $\beta_t^{-1}
Q_{t,X_{jl}}$ 
\begin{equation*}
\iota_j(t)=\bigcap_{z^n \in C^n} \left\{\frac{1}{L_{n}}\sum_{l=1}^{L_{n}} Q_{t,X_{jl}}(z^n) \in [(1 \pm
  \epsilon) \Theta_t(z^n)]\right \},
\end{equation*}
but considering the difference that the random variables $X_{jl}$ are independent of the channel
state. Then analogously to \eqref{eq:16} we obtain that
\begin{equation*}
\textrm{Pr}\{(\iota_j(t))^c\} \leq  2 |C|^n\exp 
\Big(-L_{n}\frac{2^{-n(I(p,V_t)+g(\delta))}}{3}\Big)
\end{equation*}
by Lemma \ref{chernoff} and Lemma \ref{alpha-beta}. Notice that, because the sender does not know which
channel is used, we need the maximum in the definition of $L_n$.  Thus the
right-hand side is a double exponential in $n$ and can be made smaller than $\epsilon J_n^{-1}$ for all
$j$ and for all $t\in \theta$ and sufficiently large $n$.\\
Now let $J_n$ and $L_n$ be defined as stated above, and let $X^n=\{X_{jl}\}_{j \in [J_n], l \in [L_n]}$
 be the set of i.i.d. random variables each of them distributed according to $p'$ independent of $t \in
 \theta$. As in the case of CSI we can show that reliable transmission of the message $j \in [J_n]$ can be
 achieved. To this end define now
 the random decoder $\{D_j(X^n)\}_{j \in [J_n]} \subseteq B^n$ as in \eqref{eq:18} but with
\begin{equation*}
 D'_j(X^n):=\bigcup_{s\in\theta}\bigcup_{k\in[L_{n}]} \mathcal{T}_{W_s,\delta}^{n}(X_{jk}),
\end{equation*}
and the random average probabilities of error for a specific channel $\lambda^{(t)}_n(X^n)$ as in
\eqref{eq:19}. Notice that now both $X^n$ and $L_n$ do not depend on $t \in \theta$ and this holds
throughout the entire proof. Then we can  give the bound in \eqref{eq:20} now by
\begin{equation*}
\begin{split}
W^{\otimes n}_t & ((D_j(X^n))^c | X_{jl})\\
&\leq W^{\otimes n}_t(( \mathcal{T}^{\otimes n}_{W_t, \delta}(X_{jl}))^c | X_{jl})
+\sum_{\substack{ j'\in[J_n]\\ j'\neq j }}\sum_{s\in \theta}\sum_{k\in[L_{n}]}W_t^{\otimes n}(
\mathcal{T}_{W_s,\delta}^n  (X_{j'k})|X_{jl}     )
\end{split}
\end{equation*}
We can bound the first term in the inequality by $\nu_n(\delta):= (n+1)^{|A||B|} \cdot 2^{-nc \delta^2}$
(see \eqref{eq:21}). If we average over all codebooks we get
\begin{equation*}
\begin{split}
\mathbb{E}_{X^n} &  (W^{\otimes n}_t ((D_j(X^n))^c | X_{jl}))  \\
&  \leq \nu_n(\delta) + \sum_{\substack{ j'\in[J_n]\\ j'\neq j }}\sum_{s\in
    \theta}\sum_{k\in[L_{n}]} 
\mathbb{E}_{X_{j'k}} \mathbb{E}_{X_{jl}} W_t^{\otimes n} (\theta_{W_s,\delta}^n  (X_{j'k})|X_{jl}     ).
\end{split}
\end{equation*}
By the same reasoning as in \eqref{eq:22} and \eqref{eq:23} we can give an upper bound on
\begin{equation*}
\begin{split}
 \mathbb{E}_{X_{jl}} & W_t^{\otimes n} (\mathcal{T}^n_{W_s,\delta}  (X_{j'k})|X_{jl}     )\\
& \leq \frac{q^{\otimes n}_t (\mathcal{T}_{W_s,\delta}^n  (X_{j'k}))}{p^{\otimes n} (\mathcal{T}^n_{p,\delta})}\\
&\le \frac{(n+1)^{|A||B|}}{1-(n+1)^{|A|}\cdot 2^{-nc\delta^2}}\cdot 2^{-n (I(p,W_s) -f(\delta))}
\end{split}
\end{equation*}
 for all $t \in \theta$, all $j' \neq j$ and all $k,l \in [L_n]$ with a universal $f(\delta)>0$ satisfying
 $\lim_{\delta \to 0} f(\delta)=0$. $q^{\otimes n}_t$ denotes the output distribution generated by the
 conditional distribution $W^{\otimes n}_t$ and the input distribution $p^{\otimes n}$. Additionally we
 define $\mu_n(\delta):=1-(n+1)^{|A|} \cdot 2^{-nc \delta^2}$. Then \eqref{eq:24} changes to
\begin{equation*}
\begin{split}
&\mathbb{E}_{X^n} (W^{\otimes n}_t ((D_j(X^n))^c | X_{jl}))\\
&\leq \nu_n(\delta)+\frac{(n+1)^{|A||B|}}{\mu_n(\delta)} T \cdot J_n L_n \cdot 2^{-n(\min_s
  I(p,W_s)-f(\delta))}\\
&\leq \nu_n(\delta)+ T\cdot 2^{-n\frac{\tau}{2}} 
\end{split}
\end{equation*}
by the definition of $J_n$ and $L_n$ in \eqref{eq:35}, \eqref{eq:36} and by choosing $\delta>0$ small
enough that $\tau-\frac{\tau}{4}-f(\delta) \geq \frac{\tau}{2}$. Now by defining
$a:=\frac{\min\{c\delta^2,\frac{\tau}{2}\}}{2}$ and the definition of the error probability the last
inequality results in the upper bound 
\begin{equation*}
\mathbb{E}_{X^n}(\lambda^{(t)}_n (X^n)) \leq T \cdot 2^{-na}
\end{equation*}
for any $t\in\theta$ and $n \in \mathbb{N}$ large enough. \\
Now we define the event $\iota_0(t)$ for any $t \in \theta$ and the event $\iota$ as in
\eqref{eq:25} and \eqref{eq:27} but with the difference that the input is independent of
the channel realisation. So by the same reasoning we end in
\begin{equation*}
\textrm{Pr} \{\iota^c\}\leq T^2 \cdot 2^{-nc''}
\end{equation*}
for a constant $c''>0$ and all sufficiently large $n \in \mathbb{N}$, which implies that there exist
realisations $\{x_{jl}\}$ of $\{X_{jl}\}$ such that $x_{jl} \in \iota$ for all $j \in [J_n]$ and $l \in
[L_n]$.  Then analogously to \eqref{eq:28} we get for any channel $t\in\theta$
\begin{equation*}
\left\| \frac{1}{L_n} \sum_{l=1}^{L_n} V^n_t(\cdot|x_{jl})- \Theta_t(\cdot)\right\| \leq 5 \epsilon
\end{equation*}
differs from the former only by $L_n$ in place of $L_{n,t}$.
Hence, following the same arguments subsequent to \eqref{eq:29}, we have shown that there is a
sequence of $(n,\tilde{J}_n)$ codes for which
\begin{equation*}
\max_{t \in \theta} \, \max_{j \in [\tilde{J}_n]} \frac{1}{L_n} \sum_{l \in [L_n]} W^{\otimes n}_t(D^c_j |
x_{jl}) \leq T^{\frac{1}{4}} \cdot 2^{-n\frac{a'}{2}}
\end{equation*}
holds for sufficiently large $n\in\mathbb{N}$, and the strong secrecy level is fulfilled for every channel
$t\in\theta$ by
\begin{equation*}
\|\hat{V}^n_{t,j}-\bar{V}^n_t \| \leq 6\epsilon
\end{equation*}
($\hat{V}^n_{t,j},\bar{V}^n_t$ defined as in \eqref{eq:32}, \eqref{eq:33}) and thus by
\begin{equation*}
I(J;Z^n_t) \leq -10 \epsilon\log(10\epsilon)+10n\epsilon \log|C|
\end{equation*}
which tends to zero for $n \to \infty$ uniformly in $t\in \theta$.
\end{proof} 

We turn now to the converse of Theorem \ref{low-bound}. Actually, we give only a
multiletter formula of the upper bound of the secrecy rates. First we need the following lemma.
\begin{lemma}\label{lemma-superadditivity}
 Let $\mathfrak{W}=\{ (W_t,V_t): t\in \theta   \}$ be an arbitrary compound wiretap channel without CSI. Then
\[ \lim_{n\to\infty}\frac{1}{n}\max_{U\to X^n \to Y_t^n Z_t^n}(\inf_{t\in \theta}I(U;Y_t^n)-\sup_{t \in \theta} I(U;Z_t^n)) \] 
exists and we have
\begin{eqnarray*}
 \lim_{n\to\infty}\frac{1}{n} 
 \max_{U\to X^n \to Y_t^n Z_t^n}(\inf_{t\in \theta}I(U;Y_t^n)-\sup_{t \in \theta}I(U;Z_t^n)       )\\
 = \sup_{n\in\mathbb{N}}\frac{1}{n}\max_{U\to X^n \to Y_t^n Z_t^n}(\inf_{t\in \theta} 
 I(U;Y_t^n)-\sup_{t\in \theta} I(U;Z_t^n)       ).
\end{eqnarray*}
\end{lemma}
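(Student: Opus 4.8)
The plan is to recognise the quantity in question as $a_n/n$ for a superadditive sequence $(a_n)$ and then invoke Fekete's lemma. Accordingly, set
\[
 a_n:=\max_{U\to X^n\to Y_t^nZ_t^n}\Big(\inf_{t\in\theta}I(U;Y_t^n)-\sup_{t\in\theta}I(U;Z_t^n)\Big),
\]
so that the asserted limit is $\lim_{n\to\infty}a_n/n$ and the asserted value is $\sup_{n}a_n/n$. Since the maximisation runs over test channels whose auxiliary alphabet can be bounded by the usual Carath\'eodory argument, the maximum is attained on a compact set, and $a_n\le\max_U\inf_t I(U;Y_t^n)\le n\log|B|$, so $\sup_n a_n/n$ is finite. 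By Fekete's lemma it therefore suffices to prove the superadditivity estimate $a_{n+m}\ge a_n+a_m$ for all $n,m\in\nn$.

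To establish superadditivity I would start from maximisers $U_1\to X^n\to Y_t^nZ_t^n$ and $U_2\to X^m\to Y_t^mZ_t^m$ achieving $a_n$ and $a_m$ respectively, and combine them into a single test channel for block length $n+m$ by letting the two blocks act independently: put $U:=(U_1,U_2)$ and feed $X^n$ and $X^m$ into the first $n$ and the last $m$ uses of the memoryless channel, with the triple $(U_1,X^n,Y_t^nZ_t^n)$ independent of $(U_2,X^m,Y_t^mZ_t^m)$. This is a legitimate Markov chain $U\to X^{n+m}\to Y_t^{n+m}Z_t^{n+m}$ for every $t\in\theta$, and because the two blocks are independent, mutual information is additive across them: for each fixed $t$,
\[
 I(U;Y_t^{n+m})=I(U_1;Y_t^n)+I(U_2;Y_t^m),\qquad I(U;Z_t^{n+m})=I(U_1;Z_t^n)+I(U_2;Z_t^m).
\]

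The decisive step is then to take the infimum over the legitimate channels and the supremum over the eavesdropper channels in the correct direction. Using the elementary bounds $\inf_t(\alpha_t+\beta_t)\ge\inf_t\alpha_t+\inf_t\beta_t$ and $\sup_t(\alpha_t+\beta_t)\le\sup_t\alpha_t+\sup_t\beta_t$ together with the additivity above, each one-sided operation splits along the two blocks in a way that is favourable, yielding
\[
 \inf_t I(U;Y_t^{n+m})-\sup_t I(U;Z_t^{n+m})\ge a_n+a_m .
\]
Since $U=(U_1,U_2)$ is an admissible test channel for block length $n+m$, the left-hand side is bounded above by $a_{n+m}$, whence $a_{n+m}\ge a_n+a_m$. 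I expect the only genuine subtlety to be the bookkeeping about the direction of these two one-sided inequalities: one must check that the $\inf$ over the legitimate channels produces a lower bound that adds up, while the $\sup$ over the eavesdropper channels produces an upper bound that is then subtracted, so that both push towards superadditivity. The additivity of mutual information across independent blocks and the admissibility of the product test channel are routine, and once $a_{n+m}\ge a_n+a_m$ is established the convergence together with the identification of the limit with $\sup_n a_n/n$ follows immediately from Fekete's lemma.
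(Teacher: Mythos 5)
Your proposal is correct and follows essentially the same route as the paper: define $a_n$ as the optimised difference, prove $a_{n+m}\ge a_n+a_m$ by combining two independent Markov chains into a product test channel $U=(U_1,U_2)$, split the $\inf$ and $\sup$ over $t$ in exactly the favourable directions you describe, and invoke Fekete's lemma. The only cosmetic difference is that you start from maximisers (justified via Carath\'eodory/compactness) whereas the paper works with arbitrary admissible chains and concludes by taking the supremum at the end; both are equally valid.
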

\bigskip
\begin{proof}
The proof is based on Fekete's lemma \cite{fekete}. Consequently, if we apply the lemma to the sequence
$(a_n)_{n \in \mathbb{N}}$ defined by 
\[ a_n:= \max_{U\to X^n \to Y_t^n Z_t^n}(\inf_{t\in \theta}I(U;Y_t^n)-\sup_{t\in \theta}I(U;Z_t^n)) \]
it suffices to show that the inequality
\[ a_{n+m}\ge a_n +a_m \]
holds for all $n,m\in\mathbb{N}$. This will be done by considering two  independent Markov chains 
$U_1\to X^n\to (Y_t^n,Z_t^n)$ and $U_2\to \hat X^m\to  (\hat Y_t^m,\hat Z_t^m)$
and setting $U:=(U_1,U_2)$, $X^{n+m}:=(X^n,\hat X^m)$, and $(Y_t^{n+m},Z_t^{n+m}):= ((Y_t^n,\hat
Y_t^m),(Z_t^n,\hat Z_t^m)   )$. Then by the definition of $a_n$
\begin{equation*}
\begin{split}
a_{n+m} & \geq \inf_{t \in \theta} I(U;Y^{n+m}_t)-\sup_{t\in \theta} I(U;Z^{n+m}_t) \\
& \geq \inf_{t \in \theta} I(U_1;Y^{n}_t) +  \inf_{t \in \theta} I(U_2;\hat{Y}^{m}_t) 
 - \sup_{t\in \theta} I(U_1;Z^{n}_t) - \sup_{t\in \theta} I(U_2;\hat{Z}^{m}_t). 
\end{split}
\end{equation*}
By the independence of the two Markov chains mentioned above and because apart from that these
Markov chains were arbitrary we can conclude that
\begin{equation*}
a_{n+m} \geq a_n +a_m
\end{equation*}
holds for all $n,m \in \mathbb{N}$.
\end{proof}
\begin{proposition}\label{multiletter-converse}
The secrecy capacity of the compound wiretap channel in the case of no CSI $C_S(\mathfrak{W})$ is upper
bounded by  
\small
\begin{equation*}
C_S(\mathfrak{W}) \le \lim_{n\to\infty}\frac{1}{n}\max_{U\to X^n\to Y_t^n Z_t^n}(\inf_{t\in \theta}I(U;Y_t^n)-
\sup_{t\in \theta}I(U;Z_t^n)   ).
\end{equation*}
\end{proposition}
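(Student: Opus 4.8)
The plan is to prove the converse via a standard Fano-type argument combined with Lemma \ref{lemma-superadditivity}. Suppose $R$ is an achievable secrecy rate. Then there exists a sequence of $(n,J_n)$ codes with maximum (hence average) error probability $\lambda_n\to 0$, with $\frac{1}{n}\log J_n\to R$, and satisfying the strong secrecy condition $\max_{t\in\theta}I(J;Z_t^n)\to 0$. Let $J$ be the uniformly distributed message random variable on $\mathcal{J}_n$, let $X^n$ be the channel input produced by the stochastic encoder $E$, and let $Y_t^n,Z_t^n$ be the outputs of $W_t^{\otimes n}$ and $V_t^{\otimes n}$ respectively. Crucially, because the encoder does not depend on $t$, the random variable $J\to X^n\to (Y_t^n,Z_t^n)$ forms a Markov chain simultaneously for every $t\in\theta$ with one fixed input distribution. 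This is precisely the structure appearing in the multi-letter expression, so I would take $U:=J$ as the auxiliary variable.

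First I would bound the rate from below using reliability. Since the decoding sets are universal and the error probability is small for every $t$, Fano's inequality gives, for each fixed $t\in\theta$,
\begin{equation*}
\log J_n = H(J) \le I(J;Y_t^n) + 1 + \lambda_n\log J_n,
\end{equation*}
so that $I(J;Y_t^n)\ge (1-\lambda_n)\log J_n - 1$. Taking the infimum over $t$ preserves the inequality, yielding $\inf_{t\in\theta}I(J;Y_t^n)\ge (1-\lambda_n)\log J_n - 1$. Next I would bring in the secrecy constraint: by definition of achievability we have $\sup_{t\in\theta}I(J;Z_t^n)\le \max_{t\in\theta}I(J;Z_t^n)=:\epsilon_n$ with $\epsilon_n\to 0$. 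Subtracting, we obtain
\begin{equation*}
\inf_{t\in\theta}I(J;Y_t^n)-\sup_{t\in\theta}I(J;Z_t^n)\ge (1-\lambda_n)\log J_n - 1 - \epsilon_n.
\end{equation*}

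Now I would observe that the left-hand side is, by the definition of $a_n$ in Lemma \ref{lemma-superadditivity}, at most $a_n$, because $J\to X^n\to (Y_t^n,Z_t^n)$ is an admissible Markov chain over which the maximum defining $a_n$ is taken. Dividing by $n$ gives
\begin{equation*}
\frac{a_n}{n}\ge (1-\lambda_n)\frac{\log J_n}{n} - \frac{1+\epsilon_n}{n}.
\end{equation*}
Letting $n\to\infty$, the right-hand side tends to $R$ since $\lambda_n\to 0$, $\frac{1}{n}\log J_n\to R$, and $\frac{1+\epsilon_n}{n}\to 0$. By Lemma \ref{lemma-superadditivity} the limit $\lim_{n\to\infty}\frac{a_n}{n}$ exists, so we conclude $R\le \lim_{n\to\infty}\frac{a_n}{n}$, which is exactly the claimed bound. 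Taking the supremum over all achievable $R$ yields the bound on $C_S(\mathfrak{W})$.

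The main subtlety to handle carefully is the simultaneity of the Markov chain: one must emphasise that a single encoder induces the chain $J\to X^n\to (Y_t^n,Z_t^n)$ for all $t$ at once with a common $X^n$, which is what makes $U=J$ a legitimate competitor in the maximisation defining $a_n$. The secrecy step is immediate from the definition since $\sup_t I(J;Z_t^n)$ is exactly what the strong secrecy criterion forces to vanish, so no Pinsker-type estimate is needed here; the only genuine inputs are Fano's inequality for the reliability part and the existence of the limit supplied by Lemma \ref{lemma-superadditivity}.
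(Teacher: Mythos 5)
Your proof is correct and follows essentially the same route as the paper: Fano's inequality (with data processing) for the reliability part, direct use of the strong secrecy criterion to subtract $\sup_{t}I(J;Z_t^n)$, the observation that $U=J$ with the common chain $J\to X^n\to (Y_t^n,Z_t^n)$ is admissible in the maximisation defining $a_n$, and Lemma \ref{lemma-superadditivity} to pass to the limit. The only cosmetic differences are that the paper introduces the decoder output $\hat J$ explicitly before applying Fano and works with average rather than maximum error, neither of which changes the argument.
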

\begin{proof}
Let $(\mathcal{C}_n)_{n\in\mathbb{N}}$ be any sequence of $(n,J_n)$ codes such that with
\begin{equation}\label{eq:37}
 \sup_{t\in \theta}\frac{1}{J_n}\sum_{j=1}^{J_n}\sum_{x^n\in A^n}E(x^n|j)
 W_t^{\otimes n}(D_j^{c}|x^n)=:  \varepsilon_{1,n}, 
\end{equation}
and
\begin{equation*}
\sup_{t\in \theta}I(J;Z_t^n) =:  \varepsilon_{2,n}
\end{equation*} 
it holds that $ \lim_{n\to\infty}\varepsilon_{1,n}=0$ and $ \lim_{n\to\infty}\varepsilon_{2,n}=0$, where $J$ denotes
the random variable which is uniformly distributed on the message set $\{1,\ldots, J_n  \}$. Let us
denote by $\hat J$ the random variable with values in $\{1,\ldots,J_n  \}$ determined by the Markov chain
$J\to X^n\to Y_t^n\to \hat J$ where the first transition is governed by $E$, the second by $W_t^{\otimes
  n}$, and the last by the decoding rule. Then we have for any $t\in \theta$
\begin{eqnarray}\label{eq:38}
 \log J_n &=& H(J) = I(J;\hat J)+H(J|\hat J)\nonumber\\
          &\le& I(J;Y_t^n)+ H(J|\hat J), 
\end{eqnarray}
where the inequality follows from the data processing inequality. Then using Fano's inequality we find that 
\begin{equation*}
 H(J|\hat J)\le 1+ \varepsilon_{1,n}\log J_n
\end{equation*} 
with \eqref{eq:37}. Thus we can rewrite inequality \eqref{eq:38} as
\begin{equation*}
 (1-\varepsilon_{1,n})\log J_n\le I(J;Y_t^n)+1
\end{equation*}
for all $t\in \theta$.
On the other hand we have for every $t\in \theta$
\begin{equation*}
 I(J;Y_t^n) =  I(J;Y_t^n)-\sup_{t\in \theta}I(J;Z_t^n)+\varepsilon_{2,n}
\end{equation*}
where we have used the validity of the secrecy criterion stated above. Then the last two inequalities
imply that for any $t\in \theta$
\begin{equation}\label{eq:39}
 (1-\varepsilon_{1,n})\log J_n \le I(J;Y_t^n)-\sup_{t\in \theta}I(J;Z_t^n)+\varepsilon_{2,n}.
\end{equation}
Since the LHS of (\ref{eq:39}) does not depend on $t$ we arrive at
\begin{equation*}
 (1-  \varepsilon_{1,n})  \log J_n 
\le \max_{U\to X^n\to Y_t^n Z_t^n} ( \inf_{t\in \theta}I(U;Y_t^n)-\sup_{t\in \theta}I(U;Z_t^n)) +\varepsilon_{2,n},
\end{equation*} 
which concludes the proof after dividing by $n \in \mathbb{N}$, taking $\limsup$ and taking into account the assertion
of Lemma \ref{lemma-superadditivity}.
\end{proof}
\emph{Remark.} Following the same arguments subsequent to \eqref{eq:34} concerning the use of the
channels defined by $P_{Y_t|T}=W_t \cdot P_{X|T}$ and $P_{Z_t|T}=V_t \cdot P_{X|T}$ instead of $W_t$ and
$V_t$ and applying the assertion of Theorem \ref{low-bound} to the $n$-fold product of channels $W_t$ and
$V_t$, we can give the coding theorem for the multiletter case. The capacity of the compound wiretap
channel in the case of no CSI is
\begin{equation*}
C_S(\mathfrak{W}) = \lim_{n\to\infty}\frac{1}{n}\max_{U\to X^n\to Y_t^n Z_t^n}(\inf_{t\in \theta}I(U;Y_t^n)-
\sup_{t\in \theta }I(U;Z_t^n)   ).
\end{equation*}
Let us consider now the case $\mathfrak{W}:=\{ W_t,V_s: t=1,\ldots T, s=1,\ldots S\}$ with $S\neq T$ and
the pair $(s,t)$ unknown to both the transmitter and the legitimate receiver. Additionally we assume that
each $V_s$ is a degraded version of every $W_t$, which is characterised by 
\begin{equation}\label{eq:40}
 V_s(z|x)=\sum_{y\in B}W_t(y|x)D_{(t,s)}(z|y),
\end{equation}
for all $x\in A$, $z\in C$, if $D$ is defined as the stochastic matrix $D:B\to \mathcal{P}(C)$. Then we
have the following 
\begin{lemma}\label{concavity}
Let $p\in\mathcal{P}(A)$, $W:A\to\mathcal{P}(B)$, $V:A\to\mathcal{P}(C)$, and assume that $V$ is a
degraded version of $W$.
Then $I(X;Y|Z)$ is a concave with respect to the input distribution $p_X=p$.
\end{lemma}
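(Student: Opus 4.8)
The plan is to use the degradedness hypothesis in the only form it is needed: \eqref{eq:40} says precisely that $X\to Y\to Z$ is a Markov chain, with $Z$ generated from $Y$ through the \emph{fixed} channel $D$. I would start from the identity $I(X;Y|Z)=H(Y|Z)-H(Y|X,Z)$ and reduce the claim to two separate assertions, namely that $H(Y|X,Z)$ is \emph{affine} in $p$ and that $H(Y|Z)$ is \emph{concave} in $p$. Their difference is then concave, which is exactly the statement of the lemma.

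For the affine part, observe that conditioned on $X=x$ the joint law of $(Y,Z)$ equals $W(y|x)D(z|y)$ and hence does not depend on $p$ at all. Consequently $H(Y|X=x,Z)=:c_x$ is a constant (independent of $p$), and $H(Y|X,Z)=\sum_{x\in A}p(x)\,c_x$ is linear in $p$, so in particular both concave and convex.

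For the concavity of $H(Y|Z)$ I would run the standard auxiliary-variable argument together with the fact that conditioning reduces entropy. Fix $p=\lambda p_1+(1-\lambda)p_2$ with $\lambda\in[0,1]$, introduce a switch $U\in\{1,2\}$ with $\Pr\{U=1\}=\lambda$, and let $X\sim p_i$ given $U=i$, followed by $Y\,|\,X$ through $W$ and $Z\,|\,Y$ through $D$; this yields a chain $U\to X\to Y\to Z$. Since conditioning on $U$ does not alter the law of $(Y,Z)$ given $X$, one gets $H(Y|Z,U=i)=H_{p_i}(Y|Z)$, the conditional entropy computed under input $p_i$. Then $H(Y|Z)\ge H(Y|Z,U)=\lambda\,H_{p_1}(Y|Z)+(1-\lambda)\,H_{p_2}(Y|Z)$, which is precisely concavity of $p\mapsto H(Y|Z)$. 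Combining the two parts shows that $I(X;Y|Z)=H(Y|Z)-H(Y|X,Z)$ is concave in $p$.

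The step I expect to be the only genuine pitfall is resisting the temptation to argue through $I(X;Y|Z)=I(p,W)-I(p,V)$, an identity which is valid here by the Markov property but useless on its own: although each term is concave in $p$, their difference is a priori merely a difference of concave functions. The degradedness must be genuinely used, and the clean place to insert it is the affineness of $H(Y|X,Z)$ together with the observation that $Z$ is a fixed downstream corruption of $Y$, which is exactly what makes the auxiliary-variable bound for $H(Y|Z)$ go through.
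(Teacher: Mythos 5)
Your proof is correct and follows essentially the same route as the paper: the identical decomposition $I(X;Y|Z)=H(Y|Z)-H(Y|X,Z)$ under the Markov coupling $p(x)W(y|x)D(z|y)$, with $H(Y|X,Z)$ shown to be affine in $p$ (because $p_{YZ|X}$ does not depend on $p$) and $H(Y|Z)$ shown to be concave. The only cosmetic difference is in the second assertion, where you use the auxiliary switch-variable argument with conditioning-reduces-entropy, while the paper invokes concavity of conditional entropy in the joint law (via the log-sum inequality) together with affineness of $p\mapsto p_{YZ}$; these are interchangeable standard arguments for the same fact.
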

\begin{proof}
Let $X,Y,Z$ be random variables with values in $A,B,C$ respectively distributed according to
\begin{equation}\label{eq:41}
\textrm{Pr}(X=x, Y=y,Z=z)  := p_{XYZ}(x,y,z)=p(x)W(y|x)D(z|y)
\end{equation}
for all $x \in A,y \in B,z \in Z$. Because
\begin{equation*}
 I(X;Y|Z)=H(Y|Z)-H(Y|X,Z)
\end{equation*}
the proof is based on the two assertions
\begin{enumerate}
 \item $H(Y|Z)$ depends concavely on $p_X$, and
 \item $H(Y|X,Z)$ is an affine function of $p_X$.
\end{enumerate}
\smallskip
First, $H(Y|Z)$ is a concave function with respect to $p_{YZ}$ by the log-sum inequality
(cf. \cite{csiskorn1} Lemma 3.1). Then because $p_{XYZ}$ depends affinely on $p_X$ by \eqref{eq:41},
so does $p_{YZ}$, and the first assertion follows. For the second consider that \eqref{eq:40} and
\eqref{eq:41} imply that
\begin{equation*}
p_{Y|X,Z}(y|x,z)=\frac{W(y|x)D(z|y)}{V(z|x)}
\end{equation*}
for every input distribution $p_X$, any $y \in B$ and all $x \in A, z \in C$ with $p_{XZ}(x,z) >0$. Then
we have
\begin{equation*}
H(Y|X,Z)= \sum_{x \in A,z \in C} p_{XZ}(x,z)H\left( \frac{W(\cdot|x)D(z|\cdot)}{V(z|x)} \right)
\end{equation*}
showing that $H(Y|X,Z)$ is an affine function of $ p_{XZ}$ which in turn depends affinely on $p_X$. 
\end{proof}
Now because the random variables $X,Y_t,Z_s$ ($Y_t,Z_s$ the channel outputs of $W_t$ and $V_s$ resp.)
form a Markov chain for all $t \in \theta$ and $s \in \mathcal{S}$, we obtain that
\begin{equation}\label{eq:42}
I(X;Y_t|Z_s)=I(X,Y_t)-I(X,Z_s).
\end{equation}
By virtue of Theorem 2 of \cite{ahlswcsis1} we can show that for the secrecy rate it holds that
\begin{equation*}
R_S \leq \frac{1}{n} \sum^n_{i=1}I(X_i;Y_{i,t}|Z_{i,s})+\epsilon'
\end{equation*}
for any channel $(t,s)\in \theta \times \mathcal{S}$ and $\epsilon'>0$. The concavity of $I(X;Y_{t}|Z_{s})$
with respect to the input distributions $p \in \mathcal{P}(A)$ together with \eqref{eq:42} then imply
the converse part of Theorem \ref{low-bound}, that
\begin{equation*}
R_S \leq \max_{p\in \mathcal{P}(A)}\min_{(t,s)}(I(p,W_t)-I(p,V_s)).
\end{equation*}
Now we can state the following
\begin{proposition}
If $V_s$ is a degraded version of $W_t$ for all $s \in \mathcal{S}$ and $t \in \theta$ the capacity of the
compound wiretap channel is given by
\begin{eqnarray*}
C_S(\mathfrak{W})&=& \max_{p\in\mathcal{P}(A)}\min_{(t,s)}(I(p,W_t)-I(p,V_s))\\
                  &=& \max_{p\in\mathcal{P}(A)}(\min_{t}I(p,W_t)-\max_{s}I(p,V_s)).
\end{eqnarray*}
\end{proposition}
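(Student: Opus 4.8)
The plan is to establish the claimed equality by sandwiching $C_S(\mathfrak{W})$ between matching achievability and converse bounds, and then to reconcile the two displayed expressions by a purely algebraic observation. That observation is the only new manipulation needed to pass from the first line to the second: since the indices $t$ and $s$ vary independently once $p\in\mathcal{P}(A)$ is fixed, minimising the difference decouples, i.e. $\min_{(t,s)}(I(p,W_t)-I(p,V_s)) = \min_t I(p,W_t) - \max_s I(p,V_s)$, because one makes the difference smallest by taking the smallest first term and the largest subtracted term separately. Applying $\max_p$ to both sides gives the equivalence of the two formulas, so it suffices to identify $C_S(\mathfrak{W})$ with either one of them.

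For the achievability (lower) bound I would reindex the family: regard $\{(W_t,V_s):t\in\theta,\,s\in\mathcal{S}\}$ as a single compound wiretap channel with index set $\theta\times\mathcal{S}$, assigning the pair $(W_t,V_s)$ to the index $(t,s)$, and then apply Theorem \ref{low-bound} verbatim. This yields $C_S(\mathfrak{W})\ge \max_p\big(\min_{(t,s)}I(p,W_t)-\max_{(t,s)}I(p,V_s)\big)=\max_p\big(\min_t I(p,W_t)-\max_s I(p,V_s)\big)$, where the last step uses that $I(p,W_t)$ depends only on $t$ and $I(p,V_s)$ only on $s$. By the algebraic identity above this already equals the target value, and I note that the degradedness hypothesis plays no role here; it is needed only in the converse.

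For the converse (upper) bound I would assemble the computation carried out in the paragraph preceding the statement. Starting from the single-letterisable estimate $R_S\le \frac{1}{n}\sum_{i=1}^n I(X_i;Y_{i,t}|Z_{i,s})+\epsilon'$ furnished by Theorem 2 of \cite{ahlswcsis1}, valid for every pair $(t,s)$, I would use the concavity of $I(X;Y_t|Z_s)$ in the input distribution (Lemma \ref{concavity}) together with Jensen's inequality to pass to the averaged single-letter law $\bar p:=\frac1n\sum_i p_{X_i}$, and then apply the degraded identity \eqref{eq:42} to rewrite the resulting conditional mutual information as $I(\bar p,W_t)-I(\bar p,V_s)$. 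Since this bound holds for every $(t,s)$, one takes the minimum over $(t,s)$ and then the maximum over $p$, and lets $\epsilon'\to 0$, obtaining $C_S(\mathfrak{W})\le \max_p\min_{(t,s)}(I(p,W_t)-I(p,V_s))$, which matches the lower bound.

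The genuine obstacle is not the final bookkeeping but the single point at which degradedness is indispensable. The rewriting \eqref{eq:42} and the concavity of Lemma \ref{concavity} both rely on $X\to Y_t\to Z_s$ being a Markov chain; without it the conditional mutual information $I(X;Y_t|Z_s)$ need not coincide with the difference $I(X,Y_t)-I(X,Z_s)$ and need not be concave in $p$, so neither the single-letterisation of the converse nor its exact matching with the achievable rate would go through. Once that identity and the concavity are secured, the two bounds coincide and, via the decoupling identity of the first paragraph, the two displayed expressions agree, completing the proof.
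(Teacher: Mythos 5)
Your proposal is correct and follows essentially the same route as the paper: achievability by applying Theorem \ref{low-bound} to the family indexed by $\theta\times\mathcal{S}$, and the converse via Theorem 2 of \cite{ahlswcsis1} combined with the degradedness-based identity \eqref{eq:42} and the concavity of $I(X;Y_t|Z_s)$ from Lemma \ref{concavity}, with the decoupling $\min_{(t,s)}(I(p,W_t)-I(p,V_s))=\min_t I(p,W_t)-\max_s I(p,V_s)$ reconciling the two expressions. Your explicit invocation of Jensen's inequality to pass to the averaged input distribution simply spells out what the paper leaves implicit in the phrase ``the concavity \ldots together with \eqref{eq:42} then imply the converse.''
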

\smallskip
\emph{Remark.} This result was obtained in \cite{liang} with a weaker notion of secrecy.

\subsection{Channel state to the legitimate receiver is known at the transmitter ($CSI_t$)}\label{sec:csi-t}

We now consider the case, in which the transmitter has knowledge of the channel state to the
legitimate receiver $t \in \theta$ but the channel state to the eavesdropper $s \in \mathcal{S}$ is
unknown. We will denote this  kind of channel state information by $\text{CSI}_t$. Consequently we get
for each $t \in \theta$ possible channel realisations $\mathfrak{W}_t:=\{ (W_t,V_s): s=1,\ldots
S\}$. Then we can describe the compound channel as $\mathfrak{W}=\cup_{t \in \theta} \mathfrak{W}_t$. 
\begin{theorem}\label{CSI-t}
For the secrecy capacity $C_{S,CSI_t}(\mathfrak{W})$ of the compound wiretap channel with $CSI_t$ it holds that
\begin{equation*}
C_{S,CSI_t}(\mathfrak{W}) \geq  \min_{t \in \theta} \max_{p\in\mathcal{P}(A)}(I(p,W_t)-\max_{s \in
  \mathcal{S}} I(p,V_s)).
\end{equation*}
\end{theorem}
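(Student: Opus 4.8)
The plan is to merge the two achievability arguments already developed in this paper. Since the transmitter knows $t\in\theta$ but not $s\in\mathcal{S}$, the situation is the universal-decoder construction of Theorem \ref{CSI-code} (encoder allowed to depend on $t$, decoding sets kept independent of $t$) combined with the worst-case randomisation idea from Theorem \ref{low-bound} (secure the message against an eavesdropper channel whose index is unknown). Concretely, for each $t\in\theta$ I would fix $p_t\in\pp(A)$ attaining $R_t:=\max_{p}\bigl(I(p,W_t)-\max_{s\in\mathcal{S}}I(p,V_s)\bigr)$ and define $p'_t\in\pp(A^n)$ by conditioning $p_t^{\otimes n}$ on $\ty^n_{p_t,\delta}$ exactly as in \eqref{eq:10}, so that the target rate is $R_S=\min_{t\in\theta}R_t$.

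The only structural change is to randomise against the \emph{worst} eavesdropper channel. I would set
\[ J_n=\left\lfloor 2^{n[\min_{t\in\theta}R_t-\tau]}\right\rfloor,\qquad L_{n,t}=\left\lfloor 2^{n[\max_{s\in\mathcal{S}}I(p_t,V_s)+\tau/4]}\right\rfloor, \]
and generate i.i.d.\ codewords $X^{(t)}_{jl}\sim p'_t$, independently across $t$, as before. The reliability analysis is then verbatim that of Theorem \ref{CSI-code} with the universal random decoder \eqref{eq:18}: the parameter $L_{n,t}$ enters the error bound \eqref{eq:24} only through the product $J_n\sum_{s\in\theta}L_{n,s}2^{-n(I(p_s,W_s)-f(\delta))}$, and since $\min_{t}R_t\le R_s$ the exponent of each summand equals $(\min_{t}R_t-R_s)-\tfrac{3\tau}{4}+f(\delta)\le-\tfrac{3\tau}{4}+f(\delta)$, which is made $\le-\tfrac{\tau}{2}$ by choosing $\delta$ small. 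Equivalently, the total number of codewords in each legitimate codebook satisfies $\tfrac{1}{n}\log(J_nL_{n,t})\le R_t+\max_{s}I(p_t,V_s)=I(p_t,W_t)$, so decoding stays below capacity despite the enlarged randomisation.

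For secrecy I would run the concentration argument of Theorem \ref{CSI-code} once for every pair $(t,s)\in\theta\times\mathcal{S}$: define $\tilde Q_{s,x^n}$, $\Theta'_{t,s}$, the set $S_{t,s}$, and the event $\iota_j(t,s)$ as in \eqref{eq:11}--\eqref{eq:15}, but with $V_s$ in place of $V_t$ and the input law $p'_t$. Because $L_{n,t}\ge 2^{n[I(p_t,V_s)+\tau/4]}$ for \emph{every} $s$, Lemma \ref{chernoff} together with Lemma \ref{alpha-beta} yields a doubly-exponential bound on $\mathrm{Pr}\{(\iota_j(t,s))^c\}$, and a union bound over the finitely many triples $(j,t,s)$ keeps the failure probability summable. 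On the surviving codebooks the variational estimate \eqref{eq:28} holds for each $(t,s)$, and the uniform-continuity-of-entropy step gives $\max_{t\in\theta,\,s\in\mathcal{S}}I(J;Z_{t,s}^n)\to 0$; the expurgation controlling the maximal error is unchanged.

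I expect the only genuine obstacle to be bookkeeping rather than a new idea: one must check that enlarging the randomisation from $2^{nI(p_t,V_t)}$ to $2^{n\max_{s}I(p_t,V_s)}$ does not destroy legitimate decoding, which is precisely the slack inequality $\min_{t}R_t+\max_{s}I(p_t,V_s)\le I(p_t,W_t)$ verified above, and that the finitely many secrecy constraints indexed by $s$ are met simultaneously by a single randomisation scheme per $t$. Both reduce to the finiteness of $\theta\times\mathcal{S}$, so all union bounds and the choices of $\delta,\tau$ remain uniform, and the achievability of $R_S=\min_{t\in\theta}\max_{p\in\pp(A)}\bigl(I(p,W_t)-\max_{s\in\mathcal{S}}I(p,V_s)\bigr)$ follows.
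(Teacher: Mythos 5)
Your proposal is correct and follows essentially the same route as the paper's own proof: identical choices of $J_n$ and $L_{n,t}$ (randomising at rate $\max_{s\in\mathcal{S}}I(p_t,V_s)+\tau/4$), the same universal random decoder and reliability bookkeeping inherited from Theorem \ref{CSI-code}, and the same per-pair $(t,s)$ Chernoff concentration events $\iota_j(t,s)$ combined by a union bound over the finite index set, followed by the unchanged expurgation and entropy-continuity steps. The only cosmetic difference is that you fix $p_t$ as a maximiser up front, whereas the paper carries arbitrary $p_1,\ldots,p_T$ and maximises at the end.
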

\smallskip
\begin{proof}
Adapted to the channel realisation $W_t$ define
\begin{equation}\label{eq:43}
p'_t(x^n):= \left \{ \begin{array}{ll}
\frac{p^{\otimes n}_t(x^n)}{p^{\otimes n}_t(\mathcal{T}^n_{p_t,\delta})} & \textrm{if $x^n \in \mathcal{T}^n_{p_t,\delta}$},\\
0 &  \textrm{otherwise}.
\end{array} \right.
\end{equation}
for arbitrary input distributions $p_1, \ldots, p_T \in \mathcal{P}(A)$.
Now define for $z^n\in C^n$ and $s\ \in \mathcal{S}$
\begin{equation*}
\tilde{Q}_{s,x^n}(z^n)=V_s^n(z^n|x^n)\cdot\mathbf{1}_{\mathcal{T}^n_{V_s,\delta}(x^n)}(z^n)
\end{equation*}
on $C^n$.
Additionally, we set for $z^n\in C^n$
\begin{equation*}
\Theta'_s(z^n) =\sum_{x^n \in \mathcal{T}^n_{p_t,\delta}} p'_t(x^n)\tilde{Q}_{s,x^n}(z^n).
\end{equation*}
Now let $S:=\{z^n \in C^n : \Theta'_s(z^n) \geq \epsilon\alpha_{t,s} \}$ where
$\epsilon=2^{-nc'\delta^2}$ and $\alpha_{t,s}$ is from (\ref{eq:8}) similar to the former cases
but computed with respect to $p_t$ and $V_s$. Then the support of $\Theta'_s$ has cardinality $\leq
\alpha^{-1}_{t,s}$, which implies that $\sum_{z^n \in S} \Theta_s(z^n) \geq 1-2\epsilon$. Analogously to
\eqref{eq:12} define $\Theta_s(z^n)$ and $Q_{s,x^n}(z^n)$ with
support on $S$ and further
\begin{eqnarray}
J_n&=& \lfloor 2^{n[\min_t (I(p_t,W_t)-\max_sI(p_t,V_s))-\tau]} \rfloor \label{eq:44}\\
L_{n,t}&=& \lfloor 2^{n[\max_s I(p_t,V_s)+ \frac{\tau}{4}] } \rfloor .\label{eq:45}
\end{eqnarray}
As in the case of CSI define random matrices $\{X^{(t)}_{jl}\}_{j\in[J_n],l\in[L_{n,t}]}$ such that the
random variables $X^{(t)}_{jl}$ where i.i.d. according to $p'_t$. We suppose additionally that
$\{X^{(t)}_{jl}\}_{j,l}$ and $\{X^{(t')}_{jl}\}_{j,l}$ are independent for $t\neq t'$. For any $z^n\in S$
it follows that $\Theta_s(z^n)=\mathbb{E}Q_{s,X^{(t)}_{jl}}(z^n)\ge \epsilon \alpha_{t,s}$, if $\mathbb{E}$ is the
expectation value with respect to the distribution $p'_t$. For the random variables $\beta^{-1}_{t,s}
Q_{s,X^{(t)}_{jl}}(z^n)$ define the event
\begin{equation*}
\iota_j(s,t)=\bigcap_{z^n \in C^n} \left\{\frac{1}{L_{n,t}}\sum_{l=1}^{L_{n,t}} Q_{s,X^{(t)}_{jl}}(z^n) \in [(1 \pm
  \epsilon) \Theta_s(z^n)]\right \}.
\end{equation*}
Then it follows that for all $j \in [J_n]$ and for all $s \in \mathcal{S}$ it holds for each $t \in \theta$
\begin{equation*}
\textrm{Pr}\{ (\iota_j(s,t))^c\} \leq 2 |C|^n
\exp  \Big(- L_{n,t} \frac{ 2^{-n[I(p_t,V_s)+g(\delta)]}}{3}   \Big)
\end{equation*}
by Lemma \ref{chernoff}, Lemma \ref{alpha-beta}, Thus the RHS is double exponential in $n$ uniformly in
$s\in\mathcal{S}, t \in \theta$ (guaranteed by the maximum in $s$ in the definition of $L_{n,t}$) and can
be made smaller than $\epsilon J_n^{-1}$ for all $j \in [J_n]$ and all sufficiently large $n$. Now the
coding part of the problem is similar to the case with CSI. Let $p'_t\in\mathcal{P}(A^n)$ be given as in
\eqref{eq:43}. We abbreviate  $\mathcal{X}:=\{X^{(t)}\}_{t\in \theta}$ for the family of random
matrices $X^{(t)}=\{ X_{jl}^{(t)} \}_{j\in [J_n], l\in [L_{n,t}]}$ whose components are i.i.d. according to
$p'_t$. We will show how reliable transmission of the message $j\in [J_n]$ can be achieved. To
this end define now the random decoder $\{D_j(\mathcal{X})\}_{j \in [J_n]} \subseteq B^n$ as in
\eqref{eq:18} and with
\begin{equation*}
 D'_j(\mathcal{X}):=\bigcup_{r\in\theta}\bigcup_{k\in[L_{n,r}]} \mathcal{T}_{W_r,\delta}^{n}(X_{jk}^{(r)}),
\end{equation*}
and the random average probabilities of error for a specific channel $\lambda^{(t)}_n(\mathcal{X})$ as in
\eqref{eq:19} by
\begin{equation*}
\lambda_n^{(t)}(\mathcal{X}):=\frac{1}{J_n}\sum_{j\in [J_n]}\frac{1}{L_{n,t}}\sum_{l\in[L_{n,t}]} 
  W_t^{\otimes n}((D_j(\mathcal{X}) )^{c}| X_{jl}^{(t)} ) .
\end{equation*}
As in \eqref{eq:20} we get for each $t\in \theta$ and $l\in [L_{n,t}]$
\begin{equation*}
\begin{split}
 W_t^{\otimes n}&((D_j(\mathcal{X}) )^{c}| X_{jl}^{(t)} )  \\
&\le W_t^{\otimes n}( (\mathcal{T}_{W_t,\delta}^{\otimes n}( X_{jl}^{(t)} ))^c| X_{jl}^{(t)}  ) 
 +\sum_{\substack{ j'\in[J_n]\\ j'\neq j }}\sum_{r\in \theta}\sum_{k\in[L_{n,r}]}W_t^{\otimes n}(
\mathcal{T}_{W_r,\delta}^n  (X_{j'k}^{(r)})|X_{jl}^{(t)}     ),
\end{split}
\end{equation*}
Then by Lemma \ref{typical} we can
bound the first term of the right hand side, such that together with
the independence of all involved random variables we end up with
\begin{equation}\label{eq:46}
\begin{split}
 &\mathbb{E}_{\mathcal{X}}  (  W_t^{\otimes n}((D_j(\mathcal{X}) )^{c}| X_{jl}^{(t)} )) \\
 & \leq (n+1)^{|A||B|}\cdot 2^{-nc\delta^2} \\
 & + \sum_{\substack{ j'\in[J_n]\\ j'\neq j }}\sum_{r\in \theta}\sum_{k\in[L_{n,r}]} \mathbb{E}_{X_{j'k}^{(r)}}
      \mathbb{E}_{X_{jl}^{(t)}}  W_t^{\otimes n}( \mathcal{T}_{W_r,\delta}^n  (X_{j'k}^{(r)})|X_{jl}^{(t)}     ).
\end{split}
\end{equation} 
We shall find now for $j'\neq j$ by the same reasoning as in  \eqref{eq:22} and
\eqref{eq:23} an upper bound on
\begin{equation*}
\begin{split}
\mathbb{E}_{X_{jl}^{(t)}} &  W_t^{\otimes n}( \mathcal{T}_{W_r,\delta}^n 
(X_{j'k}^{(r)})|X_{jl}^{(t)}     ) \\
&\leq  \frac{q_t^{\otimes n}(\mathcal{T}_{W_r,\delta}^n 
(X_{j'k}^{(r)})  )  }{ p_t^{\otimes n}(\mathcal{T}_{p_t,\delta}^n)}\\
&\le \frac{(n+1)^{|A||B|}}{1-(n+1)^{|A|}\cdot 2^{-nc\delta^2}}\cdot 2^{-n (I(p_r,W_r) -f(\delta))}
\end{split}
\end{equation*}
for all $r,t\in \theta$, all $j'\neq j$, and all $l\in [L_{n,t}], k\in [L_{n,r}]$. Now by defining
$\nu_n(\delta):=(n+1)^{|A||B|}\cdot 2^{-nc\delta^2}$ and $\mu_n(\delta):=1-(n+1)^{|A|}\cdot
2^{-nc\delta^2}$ thus  for each $t\in \theta$, $l\in [L_{n,t}]$, and  $j\in[J_n]$ \eqref{eq:46}
and the last inequality leads to
\begin{equation*}
\begin{split}
&\mathbb{E}_{\mathcal{X}}  (  W_t^{\otimes n}((D_j(\mathcal{X}) )^{c}| X_{jl}^{(t)} )) \\
& \le \nu_n(\delta)
+ \frac{(n+1)^{|A||B|}}{\mu_n(\delta)}J_n \sum_{r\in \theta} L_{n,r}  2^{-n (I(p_r,W_r) -f(\delta))} \\
&\le \nu_n(\delta)  
+ \frac{(n+1)^{|A||B|}}{\mu_n(\delta)} T  J_n \cdot 
2^{-n (\min_{r\in\theta}( I(p_r,W_r)- \max_s I(p_r,V_s) ) -f(\delta)-\frac{\tau}{4}   )} \\
&\le \nu_n(\delta)
 + \frac{(n+1)^{|A||B|}}{\mu_n(\delta)} T \cdot 2^{-n\frac{\tau}{2}} 
\end{split}
\end{equation*}
where we have used the definitions of $J_n$ and $L_{n,r}$ in \eqref{eq:44}, \eqref{eq:45} and we have chosen
$\delta>0$ small enough to ensure that $\tau -f(\delta)-\frac{\tau}{4}\ge \frac{\tau}{2} $. Defining
$a=a(\delta,\tau):=\frac{\min\{ c\delta^2,  \frac{\tau}{2}  \}}{2}$ we can find
$n(\delta,\tau,|A|,|B|)\in\mathbb{N}$ such that  for all $n\ge n(\delta,\tau,|A|,|B|) $ we end in
\begin{equation*}
\mathbb{E}_{\mathcal{X}}  (\lambda^{(t)}_n(\mathcal{X})) \le T\cdot 2^{-na}.
\end{equation*} 
for any $t\in \theta$.
To give a bound on the average probability of error we define the event $\iota_0(t)$ for any $t \in \theta$
as in \eqref{eq:25} and the event 
\begin{equation*}
\iota:=\bigcap_{t\in\theta}\bigcap_{s\in \mathcal{S}}\bigcap_{k=0}^{J_n}\iota_k(t,s)
\end{equation*}
differs from \eqref{eq:27} only by the intersection of the unknown channel states $s \in
\mathcal{S}$. Thus we can conclude that
\begin{eqnarray*}
\textrm{Pr}\{ \iota^c\} 
&\leq & S\cdot T \cdot \epsilon +S \cdot T^{\frac{3}{2}}\cdot 2^{-n \frac{a}{2}} \nonumber\\
&\le &  S \cdot T^2 \cdot 2^{-nc''}
\end{eqnarray*}
holds for a suitable positive constant $c''>0$ and all sufficiently large $n\in\mathbb{N}$, and we have shown that for each $t\in \theta$ there exist realisations 
$\{ (x^{(t)}_{jl})_{j\in[J_n], l\in [L_{n,t}]}: t\in\theta\}\in \iota$ of 
$\mathcal{X}$.
By the same reasoning as in \eqref{eq:28} we get for any channel realisation $t\in\theta$ to the
legitimate receiver
\begin{equation*}
\left\| \frac{1}{L_{n,t}} \sum_{l=1}^{L_{n,t}} V^n_s(\cdot|x^{(t)}_{jl})- \Theta_s(\cdot)\right\| \leq 5 \epsilon
\end{equation*}
for each of the unknown channels $s \in \mathcal{S}$ to the eavesdropper.  Now, because for any $t \in
\theta$ we have a different codeword set $\{x^{(t)}_{jl}\}$, we slightly change the definition in
\eqref{eq:31} to
\begin{equation*}
 \hat{V}^n_{(s,t)}(z^n|(j,l)):= V^n_s(z^n|x^{(t)}_{jl})
\end{equation*}
and accordingly to $\hat{V}^n_{(s,t),j}$ and $\bar{V}^n_{(s,t)}$ in \eqref{eq:32}, \eqref{eq:33} in
that way, that these distributions are defined separately for each codeword set $t \in \theta$. Thus we get, that
\begin{equation*}
\|\hat{V}^n_{(s,t),j}-\bar{V}^n_{(s,t)} \| \leq 10\epsilon
\end{equation*}
is fulfilled for all $s \in \mathcal{S}$ for each individual channel $t \in \theta$ to the legitimate receiver.\\
Hence, using the same expurgation scheme as in the previous sections we have shown that there is a
sequence of $(n,\tilde{J}_n)$ codes for which
\begin{equation*}
\max_{t \in \theta} \, \max_{j \in [\tilde{J}_n]} \frac{1}{L_{n,t}} \sum_{l \in [L_{n,t}]} W^{\otimes n}_t(D^c_j |
x^{(t)}_{jl}) \leq T^{\frac{1}{4}} \cdot 2^{-n\frac{a'}{2}}
\end{equation*}
holds for sufficiently large $n\in\mathbb{N}$, and the strong secrecy level is fulfilled for every channel
$t\in\theta$ by
\begin{equation*}
I(J;Z^n_s) \leq -10 \epsilon\log(10\epsilon)+10n\epsilon \log|C|
\end{equation*}
which tends to zero for $n \to \infty$ for all channels $s \in \mathcal{S}$ to the eavesdropper. Thus we
have shown that
\begin{equation*}
R_S = \min_{t \in \theta} \max_{p\in\mathcal{P}(A)}(I(p,W_t)-\max_{s:(s,t) \in \mathcal{S} \times \theta}I(p,V_s))
\end{equation*}
is an achievable  secrecy rate for the compound wiretap channel $\cup_{t \in \theta}\mathfrak{W}_t$ in the case
where the channel state to the legitimate receiver is known at the transmitter.
\end{proof}
\emph{Remark.} By considering the converse of Theorem \ref{CSI-t}, we get for each $t \in \theta$ possible
channel realisations $\mathfrak{W}_t:=\{(W_t,V_s): s=1,\ldots S\}$. Then we can describe the compound
channel as $\mathfrak{W}=\cup_{t \in \theta}\mathfrak{W}_t$. In accordance to the case of no CSI  for each
$t \in \theta$ we obtain that 
\begin{equation*}
C_S(\mathfrak{W}_t) = \lim_{n\to\infty}\frac{1}{n}\max_{U\to X^n\to Y_t^n Z_s^n}(I(U;Y_t^n)-
\sup_{s\in \mathcal{S}}I(U;Z_s^n)   ).
\end{equation*}
\begin{proposition}
The secrecy capacity of the compound wiretap channel in the case where only the channel state to the
legitimate receiver is known at the transmitter $C_{S,CSI_t}(\mathfrak{W})$ is given by 
\begin{equation*}
C_{S,CSI_t}(\mathfrak{W}) =  \inf_{t \in \theta} C_S(\mathfrak{W_t}).
\end{equation*}
\end{proposition}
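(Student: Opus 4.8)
The plan is to establish the two matching bounds $C_{S,CSI_t}(\mathfrak{W})\le\inf_{t\in\theta}C_S(\mathfrak{W}_t)$ and $C_{S,CSI_t}(\mathfrak{W})\ge\inf_{t\in\theta}C_S(\mathfrak{W}_t)$, using throughout the multi-letter description of the sub-channel capacities recorded in the preceding remark, namely $C_S(\mathfrak{W}_t)=\lim_{n\to\infty}\tfrac1n f_t(n)$ with $f_t(n):=\max_{U\to X^n\to Y_t^n Z_s^n}\big(I(U;Y_t^n)-\sup_{s\in\mathcal{S}}I(U;Z_s^n)\big)$. The converse is immediate: fixing $t_0\in\theta$ and taking any sequence of $(n,J_n)$ codes that achieves a secrecy rate $R$ for $\mathfrak{W}$ with $CSI_t$, the encoder $E_{t_0}$ used on channel $t_0$ together with the universal decoding sets $\{D_j\}$ has error probability tending to zero over $W_{t_0}$ and satisfies $\max_{s\in\mathcal{S}}I(J;Z_s^n)\to0$; it is therefore a valid code for the compound wiretap channel $\mathfrak{W}_{t_0}=\{(W_{t_0},V_s):s\in\mathcal{S}\}$ without CSI, so $R\le C_S(\mathfrak{W}_{t_0})$ by the very definition of the latter. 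As $t_0$ is arbitrary, $R\le\inf_{t\in\theta}C_S(\mathfrak{W}_t)$.

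For the achievability I would first lift the single-letter rate of Theorem \ref{CSI-t} to an auxiliary-variable rate by the prefixing argument employed after \eqref{eq:34}: introducing $U$ and the induced channels $P_{Y_t|U}=W_t\cdot P_{X|U}$ and $P_{Z_s|U}=V_s\cdot P_{X|U}$ and feeding these into Theorem \ref{CSI-t} shows that $\min_{t\in\theta}\max_{U\to X\to(Y_tZ_s)}\big(I(U;Y_t)-\max_{s}I(U;Z_s)\big)$ is an achievable secrecy rate. Applying the identical construction to the $n$-fold memoryless extensions $W_t^{\otimes n}$ and $V_s^{\otimes n}$, i.e.\ treating a block of length $n$ as a single super-letter and normalising the resulting rate by $n$, then yields achievability of $\tfrac1n\min_{t\in\theta}f_t(n)$ for every $n\in\nn$; the coding construction here is inherited wholesale from Theorem \ref{CSI-t} and requires no new ideas.

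The remaining and most delicate step is to identify $\sup_{n}\tfrac1n\min_{t\in\theta}f_t(n)$ with $\inf_{t\in\theta}C_S(\mathfrak{W}_t)$. For this I would invoke super-additivity: each $f_t$ is super-additive by the argument of Lemma \ref{lemma-superadditivity} specialised to $\mathfrak{W}_t$ (a fixed legitimate channel, several eavesdropper channels), and since $\min_t(a_t+b_t)\ge\min_t a_t+\min_t b_t$ the pointwise minimum $g(n):=\min_{t\in\theta}f_t(n)$ is super-additive as well, so Fekete's lemma gives $\lim_{n\to\infty}\tfrac1n g(n)=\sup_{n}\tfrac1n g(n)$. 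One inequality, $\tfrac1n g(n)\le\min_{t}C_S(\mathfrak{W}_t)$, follows at once from $\tfrac1n f_t(n)\le C_S(\mathfrak{W}_t)$. For the reverse I would exploit the finiteness of $\theta$: given $\epsilon>0$ choose $N$ so large that $\tfrac1n f_t(n)\ge C_S(\mathfrak{W}_t)-\epsilon\ge\inf_{t'}C_S(\mathfrak{W}_{t'})-\epsilon$ holds simultaneously for all $t\in\theta$ whenever $n\ge N$, whence $\tfrac1n g(n)\ge\inf_{t'}C_S(\mathfrak{W}_{t'})-\epsilon$; letting $n\to\infty$ and then $\epsilon\to0$ closes the gap. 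The main obstacle is thus not the code design but precisely this interchange of the limit in $n$ with the minimum over $t$, which is legitimate only because $\theta$ is finite and each $\tfrac1n f_t(n)$ converges; it is also what forces the capacity to equal the infimum rather than a possibly larger blocked expression.
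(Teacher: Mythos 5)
Your proof is correct and takes essentially the route the paper intends: the converse by restricting a $CSI_t$ code to a fixed $t_0$ and viewing the pair $(E_{t_0},\{D_j\})$ as a code for the no-CSI compound channel $\mathfrak{W}_{t_0}$, and achievability by feeding prefixed $n$-fold super-letter channels into Theorem \ref{CSI-t} and combining with the multi-letter characterisation of each $C_S(\mathfrak{W}_t)$. Note that the paper states this proposition with no explicit proof beyond the preceding remark, so your write-up --- in particular the superadditivity/Fekete argument for $g(n)=\min_{t\in\theta}f_t(n)$ and the use of the finiteness of $\theta$ to interchange $\lim_{n\to\infty}$ with $\min_{t\in\theta}$ --- supplies exactly the details the paper leaves implicit.
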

\smallskip
Now, additionally let us assume that each $V_s$ is a degraded version of every $W_t$ for $s\in
\mathcal{S}$ and $ t\in \theta$. Then as shown in
Lemma \ref{concavity} $I(X;Y_t|Z_s)$ is a concave function with respect to the input distribution $p_X=p$.
In particular this still holds for $\min_{s \in \mathcal{S}} I(X;Y_t|Z_s)$. Now because the random
variables $X,Y_t,Z_s$ form a Markov chain for all $t \in \theta$ and $s \in \mathcal{S}$ and 
\begin{equation*}
\min_{s \in \mathcal{S}}I(X;Y_t|Z_s)=I(X,Y_t)-\max_{s \in \mathcal{S}}I(X,Z_s),
\end{equation*}
for any $t \in \theta$ we get the upper bound on the secrecy rate as the secrecy capacity of a single
channel $W_t$ with $S$ channels to the eavesdropper. Then we can conclude
\begin{proposition}
The secrecy capacity of the channel where only the channel states to the legitimate receiver are known
and the channels to the eavesdropper are degraded versions of those to the legitimate receiver is given
by
\begin{equation*}
 C_{S,CSI_t}(\mathfrak{W}) = \min_{t\in \theta} \max_{p\in\mathcal{P}(A)}(I(p,W_t)-\max_{s \in
   \mathcal{S} }I(p,V_s)).
\end{equation*}
\end{proposition}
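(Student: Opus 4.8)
The plan is to obtain the claimed identity as the combination of the achievability bound already proved and a matching converse that exploits the degradedness hypothesis. The achievability direction
\[
C_{S,CSI_t}(\mathfrak{W}) \geq \min_{t\in\theta}\max_{p\in\pp(A)}\big(I(p,W_t)-\max_{s\in\mathcal{S}}I(p,V_s)\big)
\]
is precisely the content of Theorem \ref{CSI-t}, so only the reverse inequality remains. For the converse I would first invoke the preceding Proposition, which identifies $C_{S,CSI_t}(\mathfrak{W})=\inf_{t\in\theta}C_S(\mathfrak{W}_t)$. This reduces the task to bounding, for each fixed $t\in\theta$, the secrecy capacity of the CSI-free compound wiretap channel $\mathfrak{W}_t=\{(W_t,V_s):s\in\mathcal{S}\}$ in which the legitimate link is the single channel $W_t$ while the eavesdropper faces the family $\{V_s\}$.

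For each fixed $t$ I would apply the single-letter upper bound from Theorem 2 of \cite{ahlswcsis1} exactly as in the degraded no-CSI case treated above: for any code achieving secrecy rate $R_S$ there are single-letter input variables $X_1,\ldots,X_n$ such that, for every $s\in\mathcal{S}$ and every $\epsilon'>0$,
\[
R_S \leq \frac{1}{n}\sum_{i=1}^n I(X_i;Y_{i,t}|Z_{i,s})+\epsilon'.
\]
Writing $\phi_{t,s}(p):=I(X;Y_t|Z_s)$ for the input law $p_X=p$, the degradedness hypothesis together with the Markov chain $X\to Y_t\to Z_s$ gives $\phi_{t,s}(p)=I(p,W_t)-I(p,V_s)$, while Lemma \ref{concavity} ensures that $\phi_{t,s}$, and hence also $\min_{s\in\mathcal{S}}\phi_{t,s}$, is concave in $p$. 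Setting $\bar p:=\frac{1}{n}\sum_{i=1}^n p_{X_i}$, Jensen's inequality yields $\frac1n\sum_i\phi_{t,s}(p_{X_i})\leq\phi_{t,s}(\bar p)$, so that $R_S\leq\phi_{t,s}(\bar p)+\epsilon'$ for every $s$. Since one and the same averaged law $\bar p$ serves all $s$ simultaneously,
\[
R_S \leq \min_{s\in\mathcal{S}}\phi_{t,s}(\bar p)+\epsilon'
\leq \max_{p\in\pp(A)}\min_{s\in\mathcal{S}}\phi_{t,s}(p)+\epsilon'
= \max_{p\in\pp(A)}\big(I(p,W_t)-\max_{s\in\mathcal{S}}I(p,V_s)\big)+\epsilon'.
\]
Letting $\epsilon'\to 0$ gives $C_S(\mathfrak{W}_t)\leq\max_p\big(I(p,W_t)-\max_s I(p,V_s)\big)$ for each $t$; taking the infimum over the finite set $\theta$ and combining with the achievability bound then closes the argument.

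The step demanding the most care is the simultaneous single-letterization. The bound of \cite{ahlswcsis1} holds for each $s$ separately with possibly $i$-dependent input distributions, and one must pass to the single averaged distribution $\bar p$ before optimizing, so that the worst-case eavesdropper and the optimal input are decoupled in the right order. This is exactly where the concavity supplied by Lemma \ref{concavity} is indispensable: it lets Jensen's inequality collapse the per-letter average to the value at $\bar p$ uniformly in $s$, delivering the saddle-point expression $\max_p\min_s\phi_{t,s}(p)$ rather than a weaker average of per-letter conditional mutual informations. Without the degradedness hypothesis this collapse fails and one is left only with the multi-letter formula of the preceding Remark.
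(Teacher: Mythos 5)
Your proposal is correct and takes essentially the same route as the paper: achievability is quoted from Theorem \ref{CSI-t}, and the converse fixes $t$, invokes Theorem 2 of \cite{ahlswcsis1}, and combines the degradedness identity $\min_{s\in\mathcal{S}}I(X;Y_t|Z_s)=I(X,Y_t)-\max_{s\in\mathcal{S}}I(X,Z_s)$ with the concavity supplied by Lemma \ref{concavity} (Jensen applied to the averaged input law) to single-letterize uniformly in $s$. The only cosmetic difference is that you route the per-$t$ reduction through the identity $C_{S,CSI_t}(\mathfrak{W})=\inf_{t\in\theta}C_S(\mathfrak{W}_t)$ from the preceding proposition, whereas the paper states the same reduction directly as an upper bound by the secrecy capacity of the single channel $W_t$ facing the $S$ eavesdropper channels.
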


\subsection{Compound wiretap channel with $C_S=C_{S,CSI}$}

Let $\mathfrak{W}:=\{ W_t,V_s: t=1,\ldots T, s=1,\ldots S\}$ with $S\neq T$ and the pair $(t,s)$ unknown
to both the transmitter and the legitimate receiver. In addition let us assume that
\begin{equation}\label{eq:47}
\exists \, \hat{t} \in \theta \; \forall \, t \in \theta \; \exists \, U_t: \quad
W_{\hat{t}}=U_t W_t,
\end{equation}
which means that  $W_{\hat{t}}$ is a degraded version of all channel $W_t$ with $t \neq \hat{t}$. We
further assume that
\begin{equation}\label{eq:48}
\exists \, \hat{s} \in \mathcal{S} \; \forall \, s \in \mathcal{S}\;  \exists \, \hat{U}_s: \quad
 V_s = \hat{U}_s V_{\hat{s}},
\end{equation}
which means that all $V_s$ with $s \neq \hat{s}$ are degraded versions of $V_{\hat{s}}$. Then we can show
that the capacity of this channel equals the capacity of the same channel with CSI at the transmitter, e.g.
\begin{equation*}
C_{S}(\mathfrak{W}) = C_{S,CSI}(\mathfrak{W}).
\end{equation*}
First, by Theorem \ref{low-bound} it holds that
\begin{equation}\label{eq:49}
C_S(\mathfrak{W}) \geq \max_{M\rightarrow X \rightarrow (Y_t Z_s)} \min_{(t,s)} \,  (I(M,Y_t)-I(M,Z_s)),
\end{equation}
where $M$ is an auxiliary random variable, such that $M, X, (Y_t,Z_s)$ form a Markov chain $M\rightarrow X
\rightarrow (Y_t Z_s)$ in this order. Now let 
\begin{equation*}
p^*_{MX}=\arg \max_{M\rightarrow X \rightarrow (Y_{\hat{t}} Z_{\hat{s}})}(I(M,Y_{\hat{t}}) - I(M, Z_{\hat{s}}))
\end{equation*}
the joint distribution of $M$ and $X$ that achieves capacity for the single wiretap channel
$(W_{\hat{t}},V_{\hat{s}})$. Because the capacity of the compound wiretap channel $\mathfrak{W}$ is less
than or equal the capacity of each single channel we obtain
\begin{eqnarray}
C_{S,CSI} (\mathfrak{W}) &\leq& I(p^*_M, W_{\hat{t}}\cdot P^*_{X|M})- I(p^*_M,V_{\hat{s}}\cdot P^*_{X|M})
=C_S(W_{\hat{t}},V_{\hat{s}}) \nonumber \\
&\leq& I(p^*_M,U_t (W_t \cdot P^*_{X|M})) - I(p^*_M,\hat{U}_s (V_{\hat{s}}\cdot P^*_{X|M})) \nonumber \\
&\leq&I(p^*_M,W_t \cdot P^*_{X|M}) - I(p^*_M,V_s \cdot P^*_{X|M}) \label{eq:upper}
\end{eqnarray}
for all $(s,t) \in \mathcal{S} \times \theta$ because of \eqref{eq:47}, \eqref{eq:48}. Then by the last inequality it follows that
\begin{eqnarray*}
 I(p^*_M, W_{\hat{t}}\cdot P^*_{X|M})- I(p^*_M,V_{\hat{s}} \cdot P^*_{X|M}) &=& \min_{(s,t)}(I(p^*_M,W_t
 \cdot P^*_{X|M}) - I(p^*_M,V_s \cdot P^*_{X|M}))\\
&\leq& \max_{M\rightarrow X \rightarrow (Y_t Z_s)} \min_{(t,s)} \,  (I(M,Y_t)-I(M,Z_s))
\end{eqnarray*}
Now taking into account \eqref{eq:49} and \eqref{eq:upper} we end in
\begin{equation*}
C_{S,CSI}(\mathfrak{W}) \leq C_S(\mathfrak{W})
\end{equation*}
and therewith for this channel the lower bound of the capacity  without CSI matches the capacity of
the compound wiretap channel with CSI.

\section{Examples}\label{sec:ex}

In this section we provide some examples which display some striking features of compound wiretap
channels as opposed to the usual compound channels. Our first example shows clearly that for compound
wiretap channels with CSI at the transmitter the strategy of sending both the message and the
randomisation parameter does not work. The second one demonstrates that even in the case where the sets
of channels to the legitimate receiver and the eavesdropper both are convex, we can have
\begin{equation*}
C_{S,CSI}(\mathfrak{W})>0 \; \textrm{and} \; C_S(\mathfrak{W})=0,
\end{equation*}
as opposed to the case of the usual compound channel where the Minimax-Theorem applies.

In the following we use some simple facts which we state here without proof.\\ 
\emph{Fact 1.} The binary entropy function 
\[ h(x):=-x\log x -(1-x)\log (1-x), \quad x\in [0,1], \] 
is strictly increasing on $[0,\frac{1}{2}]$.\\
\emph{Fact 2.} Let $\eta\in [0,1]$ and set 
\[ D_{\eta}:=   
\begin{pmatrix}
1-\eta & \eta\\
\eta & 1-\eta 
\end{pmatrix}
.\]
Then for every $\tau,\tau'\in [0,1]$ it follows that
\[D_{\tau}D_{\tau'}=D_{\tau+\tau'-2\tau\tau'}.  \]
Moreover, if $\tau, \tau'\in (0,\frac{1}{2})$ then 
\[ \tau+\tau'-2\tau\tau'\in (0,\frac{1}{2})\ \textrm{and} \ \tau+\tau'-2\tau\tau' > \tau, \tau'. \]
\emph{Fact 3.} For $\tau, t\in [0,1] $
\[ (1-t)D_0+ tD_{\tau}=D_{t\tau}. \]

\subsection{Example 1}\label{sec:example1}

Consider a compound wiretap channel $\mathfrak{W}=\{(W_t,V_t):t=0,1\}$ in the case of CSI at the
transmitter. First we define the channels to the legitimate receiver and to the eavesdropper
for $t=0$ by
\begin{equation*}
W_0=D_\eta, \; \eta \in [0,\frac{1}{2}), \; \eta \approx 0, \quad\quad V_0:=D_\tau W_0, \;  \tau \in [0,\frac{1}{2}), \; \tau \approx 0,
\end{equation*}
and for $t=1$, $\hat\tau \in (0,1/2]$
\begin{equation*}
W_1:= D_{\hat{\tau}} V_0=D_{\hat{\tau}}D_\tau W_0, \quad\quad
V_1:= \begin{pmatrix}
\frac{1}{2} & \frac{1}{2}\\
\frac{1}{2} & \frac{1}{2}
\end{pmatrix}.
\end{equation*}
Hence $V_0$ and $W_1$ are degraded versions of $W_0$ and
\begin{equation*}
I(p,V_1)=0, \quad \forall p \in \mathcal{P}(A)
\end{equation*}
by definition of $V_1$. Now for every $p \in \mathcal{P}(A)$ we can choose $\tau$ small enough, that
\begin{equation*}
I(p,W_0)-I(p,V_0) < I(p,W_1).
\end{equation*}
Now with $p_0=(\frac{1}{2},\frac{1}{2})$, $\nu>0$ and because we have CSI at the transmitter we have by the defining equations \eqref{eq:13} and
\eqref{eq:14}  
\begin{eqnarray*}
J_n&=& 2^{n[I(p_0,W_0)-I(p_0,V_0))-\nu]} \\
L_{n,0}&=& 2^{n[I(p_0,V_0)+\frac{\nu}{4}]}
\end{eqnarray*}
such that we obtain
\begin{equation*}
J_nL_{n,0} =  2^{n[I(p_0,W_0)-\frac{3\nu}{4}]}.
\end{equation*}
But for $\hat{\tau}$ close to $1/2$ it holds then that
\begin{equation*}
I(p_0,W_0)-\frac{3\nu}{4} >  I(p_0,W_1)
                                    = \max_{p\in\mathcal{P}(A)} I(p,W_1)=C_{CSI}\{W_0,W_1\},
\end{equation*}
where $C_{CSI}\{W_0,W_1\}$ is the capacity of a compound channel with CSI at the transmitter. Hence we
have shown, that we can achieve reliable transmission of the message $j \in [J_n]$, but identifying both
the message  and the randomizing indices is not possible for all pairs $j \in [J_n]$ and $l \in
[L_{n,t}]$. This is in contrast to the case where we have only one channel to both the legitimate
receiver and the eavesdropper (cf. \cite{devetak}, \cite{csis96}).

\subsection{Example 2}\label{sec:example2}

Now, for $\eta, \tau \in (0,\frac{1}{2})$ we set
\begin{eqnarray*}
W_0= D_\eta,  \quad V_0:=D_\tau W_0=D_{\eta+\tau-2\eta\tau},\\ \nonumber \\
W_1:= D_\tau V_0=D_{2\tau-2\tau^2}W_0,  \quad V_1:=D_\tau W_1.
\end{eqnarray*}
Notice that $V_0$ is a degraded version of $W_0$, $W_1$ of $V_0$, and $V_1$ of $W_1$. 
Next we define for $t \in [0,1]$
\begin{eqnarray}
W_t & := & (1-t)W_0+ tW_1 \nonumber \\
       &  = & \big[ (1-t)D_0 + t D_{2\tau-\tau^2}] W_0, \label{eq:51}\\ \nonumber\\
V_t  & := & (1-t)V_0 + tV_1 \nonumber \\
       & = & D_\tau \big[ (1-t)D_0 + t  D_{2\tau-2\tau^2} \big] W_0 \nonumber \\
       & = & D_\tau W_t \label{eq:52}
\end{eqnarray}
By the definition, the set of
channels to the legitimate receiver $\{W_t\}$ and the set of channels to the eavesdropper $\{V_t\}$ both
are convex. Nevertheless we will show now, that for the compound  wiretap channel
$\mathfrak{W}:=\{(W_t,V_t):t \in [0,1]\}$ we have
 \[ C_{S,CSI}(\mathfrak{W})>0,\quad C_S(\mathfrak{W})=0 . \]
To this end, note that by (\ref{eq:51}), fact 3, and fact 2 we have
\begin{equation*}
 W_t= D_{t(2\tau-\tau^2)}D_\eta=D_{f(t,\eta,\tau)}
\end{equation*} 
with
\begin{equation}\label{eq:53} 
f(t,\eta,\tau):= \eta +t(2\tau-2\tau^2)-2\eta t (2\tau-2\tau^2)\in (0,\frac{1}{2}). 
 \end{equation}
Similarly from (\ref{eq:52}) and fact 2 we obtain
\begin{equation*}
V_t= D_\tau D_{f(t,\eta,\tau)}
   = D_{\tau+f(t,\eta,\tau)-2\tau f(t,\eta,\tau)}
\end{equation*} 
Additionally from \eqref{eq:53} and fact 2 we get
\begin{eqnarray}
 \tau+f(t,\eta,\tau)-2\tau f(t,\eta,\tau) \in (0,\frac{1}{2}) \quad \textrm{and} \nonumber \\
 \tau+f(t,\eta,\tau)-2\tau f(t,\eta,\tau)>f(t,\eta,\tau).\label{eq:54}
\end{eqnarray} 
Taking $p=(1/2,1/2)$ we obtain for every $t\in [0,1]$
\begin{equation*}
 I(p,W_t)-  I(p,V_t)
 =  h(  \tau+f(t,\eta,\tau)-2\tau f(t,\eta,\tau))-h( f(t,\eta,\tau))>0
\end{equation*} 
where the last inequality follows from fact 1 and (\ref{eq:54}). Thus we have shown that
\[ C_{S,CSI}(\mathfrak{W})>0 \]
holds by Theorem \ref{CSI-code}.

In order to show that $C_S(\mathfrak{W})=0$, we have to employ our multiletter converse in the case of
no CSI, Proposition \ref{multiletter-converse}.
First, a simple algebra shows that for any $t\in [0,1]$
\[ V_t= ((1-t)D_0+t D_{2\tau-2\tau^2})V_0 \]
by \eqref{eq:52} and thus each $V_t$ is a degraded version of $V_0$. Let us now consider the Markov
chain $U \to X^n\to (Y_t^n,Z_t^n)$ where the transition from the random variable $U$ to $Y^n_t$ is
governed by $P_{Y^n_t|U}=V^{\otimes n}_t \cdot P_{X^n|U}$ for all $t \in [0,1]$. Then we obtain that each
$P_{Y^n_t|U}$ is a degraded version of $P_{Y^n_0|U}=V^{\otimes n}_0 \cdot P_{X^n|U}$, and the data
processing inequality implies that for each $n\in \mathbb{N}$
\begin{equation}\label{eq:55}
 \max_{t\in [0,1]}I(U, Y_t^{ n})= I(U,Y_0^{n})
\end{equation} 
for all distributions $P_{U X^n}$ that satisfy the Markov chain condition $U \to X^n\to (Y_t^n,Z_t^n)$.\\
On the other hand, since $W_1=D_{\tau}V_0$ we obtain for the matrix $P_{Z^n_1|U}=W^{\otimes n}_1 \cdot
P_{X^n|U}$ by the data processing inequality and (\ref{eq:55})
for all $n\in\mathbb{N}$
\begin{equation*}
 I(U,Z_1^{n})-\max_{t\in [0,1]}I(U,Y_t^{n})=I(U,Z_1^{n})-I(U,Y_0^{n})\le 0,
\end{equation*} 
for all $P_{UX^n}$. Then Proposition \ref{multiletter-converse} implies that
\[ C_S(\mathfrak{W})=0 \]
as desired.

\section*{Acknowledgment}
Support by the Deutsche Forschungsgemeinschaft (DFG) via projects 
BO 1734/16-1, BO 1734/20-1, and by the Bundesministerium f\"ur Bildung und Forschung (BMBF) via grant
01BQ1050 is gratefully acknowledged.

\newpage

\section*{References}

\bibliographystyle{apsrev4-1long}
\bibliography{references}
\newpage

%
%
\if0
\selectlanguage{english}
\begin{center}
\large \bfseries \MakeTextUppercase{%
English name of paper
}
\end{center}
\begin{center}
\bfseries
Authors
\end{center}
\begin{center}
\begin{minipage}{\textwidth - 2cm}
\small
Abstract in English.
\end{minipage}
\end{center}
\selectlanguage{russian}
\fi

%

\begin{figure*}[t!]

\setcaptionmargin{5mm}
\onelinecaptionstrue
\scalebox{1.5}{
\includegraphics{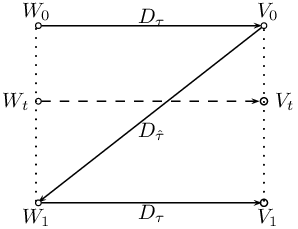}
}
\captionstyle{normal}
\caption{Compound wiretap channel $\mathfrak{W}:=\{(W_t,V_t):t \in [0,1]\}$ of Ex. 2}

\end{figure*} 

\if0

\begin{figure}
\includegraphics{fig1.eps}
\caption{....}
\end{figure}

\newpage
%
\begin{figure}
\includegraphics{fig2.eps}
\caption{....}
\end{figure}
\newpage
\fi

\if0

%
\begin{table}
\caption{....}
\bigskip
\begin{tabular}{|c|c|c|c|c|c|c|c|}
...............
\end{tabular}
\end{table}
\newpage
\begin{table}
\caption{....}
\bigskip
\begin{tabular}{|c|c|c|c|c|c|c|}
...............
\end{tabular}
\end{table}
......................

\fi

\end{document}